\definecolor{darkblue}{rgb}{0,0,0.45}
\definecolor{darkred}{rgb}{0.6,0,0}
\definecolor{darkgreen}{rgb}{0.13,0.5,0}
\theoremstyle{plain}
\newtheorem{theorem}{Theorem}
\newtheorem*{remark}{Remark}
\newtheorem{corollary}{Corollary}
\newtheorem{claim}{Claim}
\newcommand{\problemdef}[3]{
	\begin{center}
		\begin{boxedminipage}{.99\textwidth}
			\textsc{{#1}}\\[2pt]
			\begin{tabular}{ r p{0.8\textwidth}}
				\textit{~~~~Instance:} & {#2}.\\
				\textit{Question:} & {#3}?
			\end{tabular}
		\end{boxedminipage}
	\end{center}
}
\newcommand{\A}{\ensuremath{A}}
\newcounter{ctrclaim}[theorem]
\newcommand{\clm}[1]{
\begin{claim}
{#1}
\end{claim}\noindent
}
\newcommand{\rmk}[1]{
\begin{remark}
{#1}
\end{remark}\noindent
}
\DeclareMathOperator{\dist}{dist}
\newcommand \dia{\hfill{$\diamond$}}
\newcommand{\yes}{{yes}\xspace}
\newcommand{\no}{{no}\xspace}
\def\phi{\varphi}
\newcommand{\NP}{{\ensuremath{\mathsf{NP}}}\xspace}
\let\phi=\varphi
\DeclareMathOperator{\df}{:=}
\title{Colouring $(P_r+P_s)$-Free Graphs\thanks{
		T.~Masa\v{r}\'{i}k, J.~Novotn\'{a}  and V.~Sl\'{i}vov\'{a} were supported by the project GAUK 1277018 and the grant SVV–2017–260452. T.~Klimo\v{s}ov\'a was supported by the Center of Excellence – ITI, project P202/12/G061 of GA \v{C}R, by the Center for Foundations of Modern Computer Science (Charles Univ. project UNCE/SCI/004), and by the project GAUK 1277018. T.~Masa\v{r}\'ik was also partly supported by the Center of Excellence – ITI, project P202/12/G061 of GA \v{C}R. V.~Slívová was partly supported by the project 17-09142S of GA \v{C}R and Charles University project PRIMUS/17/SCI/9.
D. Paulusma was supported by the Leverhulme Trust (RPG-2016-258).
An extended abstract of this paper appeared in the proceedings of ISAAC 2018~\cite{KMMNPS18}.
This version has been published in Algorithmica journal~\cite{algo} (\url{doi.org/10.1007/s00453-020-00675-w}).
}
}
\author[1]{Tereza Klimo\v{s}ov\'a}
\author[3]{Josef Mal\'ik}
\author[1,5]{Tom\'a\v{s} Masa\v{r}\'ik}
\author[1]{\\Jana Novotn\'a}
\author[4]{Dani\"el Paulusma}
\author[2]{Veronika Sl\'ivov\'a}
\affil[1]{Department of Applied Mathematics, Faculty of Mathematics and Physics,\newline Charles University, Prague, Czech Republic}
\affil[ ]{\texttt{\{tereza, masarik, janca\}@kam.mff.cuni.cz}}
\affil[2]{Computer Science Institute, Faculty of Mathematics and Physics,\newline Charles University, Prague, Czech Republic}
\affil[ ]{\texttt{slivova@iuuk.mff.cuni.cz}}
\affil[3]{Czech Technical University in Prague, Czech Republic}
\affil[ ]{\texttt{malikjo1@fit.cvut.cz}}
\affil[4]{Department of Computer Science, Durham University, Durham, UK}
\affil[ ]{\texttt{daniel.paulusma@durham.ac.uk}}
\affil[5]{Faculty of Mathematics, Informatics and Mechanics, University of Warsaw, Warsaw, Poland}
\date{}
\providecommand{\keywords}[1]{\textbf{Keywords: } #1.}
\begin{document}
\maketitle

\begin{abstract}
The {\sc $k$-Colouring} problem is to decide if the vertices of a graph can be coloured with at most $k$ colours for a fixed integer~$k$ such that no two adjacent vertices are coloured alike. If each vertex~$u$ must be assigned a colour from a prescribed list $L(u)\subseteq \{1,\ldots,k\}$, then we obtain the {\sc List $k$-Colouring} problem.
A graph~$G$ is $H$-free if $G$ does not contain $H$ as an induced subgraph. We continue an extensive study into the complexity of these two problems for $H$-free graphs.
The graph $P_r+P_s$ is the disjoint union of the $r$-vertex path~$P_r$ and the $s$-vertex path~$P_s$.
We prove that {\sc List $3$-Colouring} is polynomial-time solvable for $(P_2+P_5)$-free graphs and for $(P_3+P_4)$-free graphs.
Combining our results with known results yields complete complexity classifications of {\sc $3$-Colouring} and {\sc List $3$-Colouring} on $H$-free graphs for all graphs~$H$ up to seven vertices. 
\end{abstract}

\keywords{vertex colouring, $H$-free graph, linear forest}

\section{Introduction}\label{s-intro}

Graph colouring is a popular concept in Computer Science and Mathematics due to a wide range of practical and theoretical applications, as evidenced by numerous surveys and books on graph colouring and many of its variants (see, for example,~\cite{Al93,C14,GJPS,JT95,KTV99,Pa15,RS04b,Tu97}).  
Formally, a {\em colouring} of a graph $G=(V,E)$ is a mapping $c: V\rightarrow\{1,2,\ldots \}$ that assigns each vertex~$u\in V$ a {\it colour} $c(u)$ in such a way that $c(u)\neq c(v)$ whenever $uv\in E$. If $1\leq c(u)\leq k$, then $c$ is also called a {\it $k$-colouring} of $G$ and $G$ is said to be $k$-{\it colourable}. The {\sc Colouring} problem is to decide if a given graph $G$ has a $k$-colouring for some given integer~$k$.

It is well known that {\sc Colouring} is \NP-complete even if $k=3$~\cite{Lo73}. To pinpoint the reason behind the
computational hardness of {\sc Colouring} one may impose restrictions on the input. This led to an extensive study of {\sc Colouring}  for special graph classes,
particularly hereditary graph classes. A graph class is {\it hereditary} if it is closed under vertex deletion. As this is a natural property, 
hereditary graph classes capture a very large collection of well-studied graph~classes. 
A classical result in this area is due to  Gr\"otschel, Lov\'asz, and Schrijver~\cite{GLS84}, who proved that {\sc Colouring} is polynomial-time solvable for perfect graphs.

It is readily seen that
a graph class ${\cal G}$ is hereditary if and only if ${\cal G}$ can be characterized by a unique set ${\cal H}_{\cal G}$ of minimal forbidden induced subgraphs.
If ${\cal H_{\cal G}}=\{H\}$, then a graph $G\in {\cal G}$ is called {\it $H$-free}. Hence, for a graph~$H$, the class of $H$-free graphs consists of all graphs with no induced subgraph isomorphic to~$H$.

Kr\'al', Kratochv\'{\i}l, Tuza, and Woeginger~\cite{KKTW01} 
started a systematic study into the complexity of {\sc Colouring} on ${\cal H}$-free graphs for sets ${\cal H}$ of size at most~2. They showed polynomial-time solvability if $H$ is an induced subgraph of $P_4$ or $P_1+P_3$ and \NP-completeness for all other graphs~$H$. The classification for the case where ${\cal H}$ has size~2 is far from finished; see the summary in~\cite{GJPS} or an updated partial overview in~\cite{DP18} for further details.
Instead of considering sets ${\cal H}$ of size~2, we consider $H$-free graphs and follow another well-studied direction, in which the number of colours $k$ is {\it fixed}, that is, $k$ no longer belongs to the input. 
This leads to the following decision problem:

\problemdef{{\sc $k$-Colouring}}{a graph $G$}{does there exist a $k$-colouring of $G$} 

 A {\it $k$-list assignment} of $G$ is a function $L$ with domain~$V$ such that the {\it list of admissible colours} $L(u)$ of each $u\in V$ is a subset of $\{1, 2, \dots,k\}$.
A colouring $c$  {\it respects} ${L}$ if  $c(u)\in L(u)$ for every $u\in V.$ 
If $k$ is fixed, then we obtain the following generalization of $k$-{\sc Colouring}:

\problemdef{{\sc List $k$-Colouring}}{a graph $G$ and a $k$-list assignment~$L$}{does there exist a colouring of $G$ that respects $L$} 

For every $k\geq 3$, 
{\sc $k$-Colouring} on $H$-free graphs is \NP-complete if $H$ contains a cycle~\cite{EHK98} or an induced claw~\cite{Ho81,LG83}. Hence, it remains to consider the case where $H$ is a {\it linear forest} (a disjoint union of paths).
The situation is far from settled yet, although many partial results are
known~\cite{BCMSZ,BFGP13,BGPS12b,CMSZ17,CSZ,CS,CGKP15,GPS14b,HKLSS10,Hu16,LRS07,RS04a,RST02,WS01}.
Particularly, the case where~$H$ is the $t$-vertex path $P_t$ has been 
well
studied.
The cases  $k=4$, $t=7$ and  $k=5$, $t=6$ are \NP-complete~\cite{Hu16}.
For $k\geq 1$, $t=5$~\cite{HKLSS10} and $k=3$, $t=7$~\cite{BCMSZ}, even {\sc List $k$-Colouring} on $P_t$-free graphs is polynomial-time solvable (see also~\cite{GJPS}).

For a fixed integer~$k$, the $k$-{\sc Precolouring Extension} problem is to decide if a given $k$-colouring~$c'$ defined on an induced subgraph $G'$ of a graph~$G$ can be extended to a $k$-colouring~$c$ of $G$. 
Note that \textsc{$k$-Colouring} is a special case of \textsc{$k$-Precolouring Extension}, whereas the latter problem can be formulated as a special case of \textsc{List $k$-Colouring} by assigning list $\{c'(u)\}$ to every vertex~$u$ of $G'$ and list $\{1,\ldots,k\}$ to every other vertex of $G$. 
Recently, it was shown in~\cite{CSZ} that 4-{\sc Precolouring Extension}, and therefore 4-{\sc Colouring},
 is polynomial-time solvable for $P_6$-free graphs. In contrast, the more general problem {\sc List $4$-Colouring} is \NP-complete for $P_6$-free graphs~\cite{GPS14b}.  See Table~\ref{t-table1} for a summary of all these results.

\begin{table}
\vspace*{-0.45cm}
\begin{center}
\resizebox{360pt}{35pt}{
\begin{tabular}{c|c|c|c|c||c|c|c|c||c|c|c|c}
& \multicolumn{4}{c||}{{\sc $k$-Colouring}} & \multicolumn{4}{c||}{ {\sc $k$-Precolouring Extension}}&\multicolumn{4}{c}{ {\sc List $k$-Colouring}}\\
\hline
$\;\;t\;\;$                                 & {\small $k=3$}             & {\small $k=4$}                  & {\small $k=5$}               & {\small $k\ge 6$}                          & {\small $k=3$}             & {\small $k=4$}                  & {\small $k=5$}               & {\small $k\ge 6$}                             & {\small $k=3$}             & {\small $k=4$}                  & {\small $k=5$}               & {\small $k\ge 6$} \\
\hline
{\small $t\leq 5$}    & {\small P} & {\small P}      & {\small P} & {\small P}  & {\small P} & {\small P}      & {\small P} & {\small P} 
 & {\small P} & {\small P}      & {\small P} & {\small P} \\
{\small $t=6$}         & {\small P} & {\small P}                     & {\small NP-c}               & {\small NP-c}                 & {\small P} & {\small P}                    & {\small NP-c}            & {\small NP-c} 
&{\small P} & {\small NP-c}                  & {\small NP-c}            & {\small NP-c} \\
{\small $t=7$}         & 
{\small P}             
& {\small NP-c}                     & {\small NP-c}               & {\small NP-c}  & {\small P}              &  {\small NP-c}    & {\small NP-c}        & {\small NP-c}
& {\small P}              & {\small NP-c}    & {\small NP-c}        & {\small NP-c}\\
{\small $t\geq 8$}   & ?              &  {\small NP-c}    &{\small NP-c} & {\small NP-c}  &? &  {\small NP-c}    &{\small NP-c} & {\small NP-c} 
  &?              & {\small NP-c} & {\small NP-c} & {\small NP-c} 
\end{tabular}
}
\end{center}
\caption{Summary for $P_t$-free graphs.}\label{t-table1}
\end{table}

From Table~\ref{t-table1} we see that only the cases $k=3$, $t\geq 8$ are still open, although some partial results are known for $k$-{\sc Colouring} for the case $k=3$, $t=8$~\cite{CS}.
The situation when $H$ is a disconnected linear forest $\bigcup P_i$ is less clear. It is known that for every $s\geq 1$, {\sc List 3-Colouring} is polynomial-time solvable for $sP_3$-free graphs~\cite{BGPS12b,GJPS}. 
For every graph~$H$, {\sc List $3$-Colouring} is polynomial-time solvable for $(H+P_1)$-free graphs if it is polynomially solvable for $H$-free graphs~\cite{BGPS12b,GJPS}. 
If $H=rP_1+P_5$ $(r\geq 0)$, then for every integer~$k$, {\sc List $k$-Colouring} is polynomial-time solvable on $(rP_1+P_5)$-free graphs~\cite{CGKP15}. This result cannot be extended to larger linear forests~$H$, as {\sc List $4$-Colouring} is \NP-complete for $P_6$-free graphs~\cite{GPS14b}
and {\sc List $5$-Colouring} is \NP-complete for $(P_2+P_4)$-free graphs~\cite{CGKP15}. 

A way of making progress is to complete a classification by bounding the size of $H$. It follows from the above results and the ones in 
Table~\ref{t-table1} that for a graph~$H$ with $|V(H)|\leq 6$,  $3$-{\sc Colouring} and {\sc List $3$-Colouring} (and consequently,   
$3$-{\sc Precolouring Extension}) are polynomial-time solvable on $H$-free graphs if $H$ is a linear forest, and \NP-complete 
otherwise (see also~\cite{GJPS}). There are two open cases~\cite{GJPS} that must be solved in order to obtain the same statement 
for graphs~$H$ with $|V(H)|\leq 7$. These cases are
\begin{itemize}
\item [$\bullet$] $H=P_2+P_5$
\item [$\bullet$] $H=P_3+P_4$.
\end{itemize}

\subsection*{Our Results}
 
In Section~\ref{s-poly} we address the two missing cases listed above by proving the following theorem.

\begin{theorem}\label{t-main}
{\sc List $3$-Colouring} is polynomial-time solvable for $(P_2+P_5)$-free graphs and for $(P_3+P_4)$-free graphs.
\end{theorem}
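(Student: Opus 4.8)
The plan is to reduce, in polynomial time, to polynomially many instances in which every list has size at most two, because List 3-Colouring with all lists of size at most two is solvable in polynomial time by the standard reduction to 2-SAT: introduce one Boolean variable recording which of the (at most) two admissible colours a vertex takes, and turn each edge into a $2$-clause. Before anything else I would repeatedly propagate, namely whenever a vertex has a singleton list I delete that colour from the lists of its neighbours, answering \no if some list becomes empty. I would also dispose of disconnected inputs first. For $(P_2+P_5)$-free graphs this is clean: if two distinct components contained an edge and an induced $P_5$ respectively, then together they would induce a $P_2+P_5$, so at most one component contains an induced $P_5$; the $P_5$-free components are handled by the known polynomial algorithm for List 3-Colouring on $P_5$-free graphs, and any remaining components are edgeless and trivial. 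The analogous splitting for $(P_3+P_4)$-free graphs leaves at most one component containing an induced $P_4$, with the others being $P_4$-free (cographs) or $P_3$-free (disjoint unions of cliques), all easy. Hence I may assume $G$ is connected and, unless it is already $P_5$-free (resp.\ $P_4$-free) and thus solved, contains an induced $P_5$ (resp.\ $P_4$).

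The structural fact I would exploit is that if $A$ induces a $P_5$ in a $(P_2+P_5)$-free graph, then $V(G)\setminus N[A]$ is an independent set, since any edge inside it, together with $A$, would induce a $P_2+P_5$. (The same argument shows $V(G)\setminus N[A]$ is $P_3$-free, i.e.\ a disjoint union of cliques, when $A$ induces a $P_4$ in a $(P_3+P_4)$-free graph, and $P_4$-free when $A$ induces a $P_3$.) I would fix such an $A$ and branch over all at most $3^{|A|}$ list-respecting colourings of $A$, propagating in each branch. Since every vertex of $D:=N(A)\setminus A$ has a neighbour in $A$ whose colour is now fixed, propagation shrinks every list in $D$ to size at most two. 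So after this bounded branching the only vertices that can still carry a size-three list lie in the anticomplete part $F:=V(G)\setminus N[A]$, which is independent (resp.\ a union of cliques, resp.\ a cograph).

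The main obstacle is exactly these surviving size-three vertices of $F$. They cannot simply be fed to the 2-SAT engine: a vertex $v\in F$ with list $\{1,2,3\}$ contributes the clause $x_{v,1}\vee x_{v,2}\vee x_{v,3}$, which is genuinely three-literal. Indeed, knowing only that the size-three vertices form an independent set whose neighbours have size-two lists is not enough for tractability, because a vertex $v$ adjacent to three vertices pushed toward colours $1$, $2$, $3$ encodes an arbitrary $3$-clause; this is where $(P_2+P_5)$-freeness (resp.\ $(P_3+P_4)$-freeness) must be used. The point is to show that, once $A$ is coloured, the neighbourhoods in $D$ of the size-three vertices of $F$ are too structured to realise all three colours independently. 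Concretely, I would argue that two size-three vertices of $F$ with suitably independent neighbourhoods in $D$, combined with a sub-path of $A$, recreate a forbidden $P_2+P_5$; such claims should bound the interaction pattern between $F$ and $D$ enough that, after a further polynomial amount of branching (guessing the few colours each still-problematic $F$-vertex's neighbourhood is permitted to realise), every remaining list has size at most two.

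With all lists reduced to size at most two in every one of the polynomially many branches, I solve each resulting instance with the 2-SAT engine and return \yes if any branch succeeds. The $(P_3+P_4)$-free case follows the same skeleton, with the independent far part replaced by a disjoint union of cliques (branching on an induced $P_4$) or a cograph (branching on an induced $P_3$); in both cases the far part is perfect, so in isolation it is easy, and the difficulty is again confined to the size-three vertices and how their $D$-neighbourhoods are coloured. I expect this last control step -- converting $(P_r+P_s)$-freeness into a usable bound on the colour patterns available in the far neighbourhoods -- to absorb essentially all of the technical effort.
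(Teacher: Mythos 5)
Your framework is sound as far as it goes --- reduce to polynomially many instances of $2$-list colouring, dispose of disconnected inputs, colour an anchor path so that only vertices anticomplete to it can retain size-three lists --- and you have correctly located the crux: a far vertex with list $\{1,2,3\}$ whose neighbours are pushed towards three different colours encodes a genuine $3$-clause, so the forbidden-subgraph hypothesis must be converted into a bound on which colour patterns such a neighbourhood can realise. But that conversion \emph{is} the theorem, and your proposal does not carry it out: the sentence ``such claims should bound the interaction pattern \ldots\ enough that, after a further polynomial amount of branching \ldots\ every remaining list has size at most two'' is precisely the statement to be proved. The paper spends essentially all of its effort on it (Phases 2--4, Branchings II--VII, and a long sequence of structural claims), via successive $O(n^{45})$-, $O(n^2)$- and $O(n)$-fold branchings that first normalise the lists of the $N_1$-neighbours of the active vertices to a single pair $\{h,i\}$ and only then, using neighbourhood-identification and list-reduction rules, kill the remaining size-three lists. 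So what you have is a correct reduction of the problem to its hard core, not a proof of the hard core.

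A second, more technical concern is your choice of anchor. You branch on an induced $P_5$ (resp.\ $P_4$), whereas the paper anchors on an induced $P_7$, falling back on the known polynomial-time algorithm for {\sc List $3$-Colouring} on $P_7$-free graphs when none exists. This is not cosmetic: the paper's structural claims repeatedly manufacture the forbidden $P_3+P_5$ or $P_3+P_4$ by combining a short path among the uncoloured vertices with a long induced path assembled from the seven fixed vertices $v_1,\ldots,v_7$ (sometimes with one of them exchanged for a specially kept coloured vertex $w$), and several arguments need both ends of the $P_7$ simultaneously --- one end to host the $P_3$-part and the other to extend the $P_5$-part. With only five (or four) anchor vertices these configurations are unavailable, and it is far from clear that the independent (resp.\ clique-union) far part you obtain gives enough leverage; the paper's closing remark, that even with the $P_7$ anchor the final phase fails for $(P_3+P_5)$-free graphs, is a warning about how much anchored structure is actually consumed. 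To turn your outline into a proof you would either need to adopt the $P_7$ anchor and the attendant layer analysis, or supply a genuinely new argument for the control step.
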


We prove Theorem~\ref{t-main} as follows. If the graph $G$ of an instance $(G,L)$ of {\sc List $3$-Colouring} is $P_7$-free, then we can use the aforementioned result of Bonomo~et~al.~\cite{BCMSZ}. Hence we may assume that $G$ contains an induced $P_7$. We consider every possibility of colouring the vertices of this $P_7$ and try to reduce each resulting instance to a polynomial number of smaller instances of $2$-{\sc Satisfiability}. As the latter problem can be solved in polynomial time, the total running time of the algorithm will be polynomial. The crucial proof ingredient is that we partition the set of vertices of $G$ that do not belong to the $P_7$ into subsets of vertices that are of the same distance to the $P_7$. This leads to several ``layers'' of $G$. We analyse  how the vertices of each layer are connected to each other and to vertices of adjacent layers so as to use this information in the design of our algorithm. 

Combining Theorem~\ref{t-main} with the known results yields the following complexity classifications for graphs~$H$ up to seven vertices; 
see Section~\ref{a-a} for its proof.

\begin{corollary}\label{c-summary}
Let $H$ be a graph with $|V(H)|\leq 7$. If $H$ is a linear forest, then {\sc List $3$-Colouring} is polynomial-time solvable for $H$-free graphs; otherwise already $3$-{\sc Colouring} is \NP-complete for $H$-free graphs.
\end{corollary}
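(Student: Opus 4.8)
The plan is to prove the stated dichotomy by case analysis on the structure of $H$, reducing every case to a result already recorded in the excerpt. The starting observation is that a graph $H$ on at most seven vertices is a linear forest exactly when it is acyclic and has maximum degree at most two; if it fails to be a linear forest then it must either contain a cycle or contain a vertex of degree at least three. I would treat the hardness implication and the tractability implication separately.

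For the hardness implication, suppose $H$ is not a linear forest. If $H$ contains a cycle, then $3$-{\sc Colouring} is already \NP-complete on $H$-free graphs by~\cite{EHK98}. Otherwise $H$ is a forest but has a vertex $v$ of degree at least three; choosing three neighbours $a,b,c$ of $v$ and using acyclicity, the set $\{a,b,c\}$ must be independent, so $\{v,a,b,c\}$ induces a claw, and \NP-completeness of $3$-{\sc Colouring} on $H$-free graphs follows from~\cite{Ho81,LG83}. This settles every $H$ that is not a linear forest.

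For the tractability implication I would rely on two monotonicity facts. First, if $H$ is an induced subgraph of $H'$, then every $H$-free graph is $H'$-free, so a polynomial-time algorithm for {\sc List $3$-Colouring} on $H'$-free graphs also solves it on $H$-free graphs. Second, by~\cite{BGPS12b,GJPS}, tractability on $H$-free graphs implies tractability on $(H+P_1)$-free graphs. These let me write any linear forest $H$ on at most seven vertices as $H_0+tP_1$, where every component of $H_0$ is a path on at least two vertices, solve the finitely many possibilities for $H_0$, and then reattach the $t$ isolated vertices one at a time. The possibilities for $H_0$ are: the single paths $P_2,\dots,P_7$, covered by the $P_t$-free results collected in Table~\ref{t-table1} and~\cite{BCMSZ}; the forests $P_2+P_5$ and $P_3+P_4$, covered by Theorem~\ref{t-main}; the forests $P_2+P_4$ and $P_2+P_3$, which are induced subgraphs of $P_2+P_5$ and $P_3+P_4$ respectively; and $2P_2$, $2P_3$, $3P_2$ and $P_3+2P_2$, all of which are induced subgraphs of $3P_3$ and hence covered by the $sP_3$-free result of~\cite{BGPS12b,GJPS}. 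The one configuration not reached by stripping is $H=tP_1$, for which $H_0$ is empty; there I would instead note that $tP_1$ is an induced subgraph of $tP_1+P_5$ and invoke Theorem~\ref{t-gen}.

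The proof is mostly bookkeeping, so the real work is in guaranteeing completeness of the enumeration and checking the induced-subgraph embeddings. The delicate point, and the reason the $+P_1$ reduction is genuinely needed rather than a single ambient template, is that some forests resist direct embedding: for instance $P_4+P_2+P_1$ is not an induced subgraph of any $rP_1+P_5$, because a $P_2$ component cannot be realised among the isolated vertices, nor of $P_3+P_4$; only after stripping the isolated vertex and handling $P_4+P_2$ (an induced subgraph of $P_2+P_5$) does the reattachment step apply. I expect the main obstacle to be simply verifying that no linear forest on at most seven vertices slips through this scheme.
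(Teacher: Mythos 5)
Your proof is correct, and at its core it follows the same strategy as the paper: dispose of non-linear-forests via the cycle/claw hardness results of~\cite{EHK98,Ho81,LG83}, and cover linear forests by combining induced-subgraph monotonicity with the known results for $P_7$-free, $sP_3$-free and $(rP_1+P_5)$-free graphs together with Theorem~\ref{t-main}. The organisational difference is that you decompose $H=H_0+tP_1$ and reattach the isolated vertices one at a time via the $(H+P_1)$ lemma of~\cite{BGPS12b,GJPS}, whereas the paper tries to name a single explicit ambient template for each $H$ directly. Your version is in fact the more robust of the two: the paper's case analysis asserts that once $H$ is not an induced subgraph of any $rP_1+P_7$ and $H\neq P_1+3P_2$, all components of $H$ contain an edge, and this overlooks $H=P_1+P_2+P_4$ (two edge-containing components plus an isolated vertex; not an induced subgraph of $rP_1+P_7$, of $4P_3$, of $P_2+P_5$ or of $P_3+P_4$). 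Your scheme handles exactly this case --- strip the $P_1$, embed $P_2+P_4$ into $P_2+P_5$, apply Theorem~\ref{t-main}, reattach the isolated vertex --- and you even single it out as the delicate point. Your enumeration of the possible $H_0$ (partitions of at most $7$ into parts of size at least $2$) is complete, so nothing slips through; the argument stands as written.
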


\subsection*{Preliminaries}\label{s-pre}

Let $G=(V,E)$ be a graph. 
For a vertex $v\in V$, we denote its \emph{neighbourhood} by $N(v)=\{u\; |\; uv\in E\}$, its 
{\it closed neighbourhood} by
$N[v]=N(v)\cup \{v\}$
and its degree  by $\deg(v)=|N(v)|$.
For a set $S\subseteq V$, we write $N(S)=\bigcup_{v\in S}N(v)\setminus S$ and $N[S]=N(S)\cup S$, and we let $G[S]=(S,\{uv\; |\; u,v\in S\})$ be the subgraph of $G$ induced by $S$.
The {\it contraction} of an edge $e=uv$ removes $u$ and $v$ from $G$ and introduces a new vertex which is made adjacent to every vertex in $N(u)\cup N(v)$.
 The {\it identification} of a set $S\subseteq V$ by a vertex~$w$ removes all vertices of $S$ from $G$, introduces $w$ as a new vertex and makes $w$ adjacent to every vertex in $N(S)$.
The {\it length} of a  path is its number of edges.
The {\it distance} $\dist_G(u,v)$ between two vertices $u$ and~$v$ is the length of a shortest path between them in $G$.
The {\it distance} $\dist_G(u,S)$ between a vertex $u\in V$ and a set $S\subseteq V\setminus \{v\}$ is defined as $\min\{\dist(u,v)\; |\; v\in S\}$.

For two graphs $G$ and $H$, we use $G+H$ to denote the disjoint union of $G$ and $H$, and we write $rG$ to denote the disjoint union of $r$ copies of $G$.
Let $(G,L)$ be an instance of {\sc List 3-Colouring}. For $S\subseteq V(G)$, we write $L(S)=\bigcup_{u\in S}L(u)$.
We let $P_n$ and $K_n$ denote the path and complete graph on $n$ vertices, respectively. 
The {\it diamond} is the graph obtained from $K_4$ after removing an edge.

We say that an instance $(G',L')$ is {\it smaller} than some other instance $(G,L)$ of {\sc List 3-Colouring} if either $G'$ is an induced subgraph of $G$ with $|V(G')|<|V(G)|$; or $G'=G$ and $L'(u)\subseteq L(u)$ for each $u\in V(G)$, such that there exists at least one vertex~$u^*$ with $L'(u^*)\subset L(u^*)$.

\section{The Proof of Theorem~1}\label{s-poly}

In this section we show that {\sc List $3$-Colouring} problem  is polynomial-time solvable for $(P_2+P_5)$-free graphs and for $(P_3+P_4)$-free graphs. As arguments for these two graph classes are overlapping, we prove both cases simultaneously.
Our proof uses the following two results.

\begin{theorem}[\cite{BCMSZ}]\label{t-p7}
{\sc List $3$-Colouring} is polynomial-time solvable for $P_7$-free graphs.
\end{theorem}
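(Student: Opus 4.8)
The plan is to reduce, in polynomial time, to polynomially many instances of \textsc{List $3$-Colouring} in which every vertex has a list of size at most~$2$. Such an instance is equivalent to an instance of $2$-\textsc{Satisfiability}: for each vertex $u$ with $L(u)=\{a,b\}$ introduce a boolean variable selecting $a$ or $b$, reject a branch as soon as a list becomes empty and substitute the forced value whenever a list becomes a singleton, and translate each edge into clauses forbidding equal colours on its endpoints. Since $2$-\textsc{Satisfiability} is polynomial-time solvable, it suffices to produce, in polynomial time, a polynomial-size family of size-$2$-list instances such that $(G,L)$ is a yes-instance if and only if at least one member of the family is.

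First I would reduce to the case that $G$ is connected, handling components independently. The engine of the reduction is the following propagation principle: if $S$ is a set of vertices all of whose colours have been fixed, then every vertex with a neighbour in $S$ loses that neighbour's colour from its list. In particular, if $D$ is a \emph{dominating} set all of whose vertices are coloured, then every vertex outside $D$ drops to a list of size at most~$2$, and we reach a $2$-\textsc{Satisfiability} instance. Connected $P_7$-free graphs do admit useful dominating sets: by the known Bacs\'o--Tuza/Camby--Schaudt structure theory, a connected $P_7$-free graph has a connected dominating set $D$ for which $G[D]$ is itself $P_5$-free. Were $D$ of constant size we would be done immediately---enumerate its $O(1)$ colourings, propagate, and solve the resulting $2$-\textsc{Satisfiability} instance.

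The main obstacle is that $D$ can be arbitrarily large, so its colourings cannot be enumerated directly. Here I would descend into $G[D]$: being connected and $P_5$-free, it has a dominating clique or a dominating induced $P_3$. A clique in a $3$-colourable graph has at most three vertices (otherwise reject the instance), and a $P_3$ has three vertices, so in either case a \emph{constant-size} set $D_0\subseteq D$ dominates $D$; guessing the $O(1)$ colourings of $D_0$ trims every list inside $D$ to size at most~$2$. The genuine difficulty---and the heart of the argument---is that trimming $D$ to size-$2$ lists does \emph{not} yet trim the vertices outside $D$, since such a vertex may see only vertices of $D$ whose colour is still undecided. Resolving this is exactly where $P_7$-freeness must be used in force: one partitions the vertices outside $D$ according to their neighbourhoods in $D_0$ and in $D$, and argues that a vertex still carrying a size-$3$ list, together with a long alternating chain of undecided neighbours bridging it to $D_0$, would create an induced $P_7$. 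This confines the still-undecided vertices to a bounded ``frontier'' on which a further constant-depth branching suffices to force size-$2$ lists everywhere.

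After these guesses and all forced propagations, each surviving branch is a size-$2$-list instance, hence a $2$-\textsc{Satisfiability} instance, and $G$ is a yes-instance if and only if some branch is satisfiable. Since the number of branches is polynomial and each is solved in polynomial time, the overall algorithm runs in polynomial time. I expect the hardest and most delicate part to be keeping the branching polynomial while guaranteeing that every surviving list genuinely shrinks to size two: controlling the interaction between the dominating set $D$, the bounded core $D_0$, and the outside layers---repeatedly invoking $P_7$-freeness to exclude long induced paths---is precisely where a naive approach would blow up exponentially, so the bulk of the work lies in showing that only polynomially many configurations of undecided vertices can arise.
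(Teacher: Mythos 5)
This statement is not proved in the paper you were given: it is imported as a black box from \cite{BCMSZ}, where it is the main result of a long and intricate structural analysis. So the comparison can only be made against that proof, and against it your sketch has a genuine gap exactly where you yourself place ``the heart of the argument.'' Your preliminary steps are sound: the reduction to size-$2$-list instances via $2$-\textsc{Satisfiability} (this is Theorem~\ref{t-2sat} in the paper), working component-wise, the Camby--Schaudt connected dominating set $D$ with $G[D]$ being $P_5$-free (or isomorphic to $P_5$), the Bacs\'o--Tuza dominating clique or $P_3$ inside $G[D]$, and the observation that colouring the constant-size set $D_0$ shrinks every list \emph{inside} $D$ to size at most two.

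The unproved step is the treatment of vertices outside $N[D_0]$, and the claim you make about them is false as stated. After colouring $D_0$, the vertices retaining lists of size $3$ are precisely those with no coloured neighbour, i.e.\ all of $V\setminus N[D_0]$, and this set can have linear size rather than being a ``bounded frontier.'' Concretely, let $G$ be the blow-up of $C_6$ with blobs $B_1,\ldots,B_6$ of size $m$ (this graph is bipartite, $K_4$-free, and even $P_6$-free). Picking $d_i\in B_i$, the set $D=\{d_1,d_2,d_3,d_4\}$ is a connected dominating set inducing $P_4$, and $D_0=\{d_2,d_3\}$ is a dominating clique of $G[D]$; after colouring $d_2,d_3$, every vertex of $B_5\cup B_6$, i.e.\ $2m=\Theta(n)$ vertices, still carries a full list of size $3$, because its only neighbours in $D$ are the uncoloured vertices $d_1,d_4$. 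Your $P_7$-freeness argument bounds the wrong quantity: it can bound the \emph{distance} from an undecided vertex to the coloured core (compare the layer structure $V=N_0\cup N_1\cup N_2\cup N_3$ in the proof of Theorem~\ref{t-main}), but what must be controlled to keep the branching polynomial is the \emph{number and interaction} of the undecided vertices. Designing a polynomial branching scheme for these potentially $\Theta(n)$ size-$3$-list vertices is the actual content of the theorem---it is what Phases~2--4 and Branchings~II--VII of Theorem~\ref{t-main} accomplish in the easier $(P_2+P_5)$- and $(P_3+P_4)$-free settings, and what \cite{BCMSZ} does at far greater length for $P_7$-free graphs---and your proposal, by its own admission, leaves it open.
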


If we cannot apply Theorem~\ref{t-p7}, our strategy is to reduce, in polynomial time, an instance $(G,L)$ of {\sc List 3-Colouring} to a polynomial number of smaller instances of {\sc 2-List Colouring}. We use the following well-known result due to Edwards.

\begin{theorem}[\cite{Ed86}]\label{t-2sat}
The {\sc $2$-List Colouring} problem is linear-time solvable.
\end{theorem}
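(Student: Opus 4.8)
The plan is to reduce {\sc $2$-List Colouring} to {\sc $2$-Satisfiability} in linear time; since the latter is linear-time solvable via the classical implication-graph / strongly-connected-components algorithm, this yields the claim. Here {\sc $2$-List Colouring} is the problem in which each vertex~$v$ of a graph~$G=(V,E)$ carries a list $L(v)$ of \emph{at most two} admissible colours (drawn from an arbitrary palette), and we ask for a proper colouring~$c$ with $c(v)\in L(v)$ for all~$v$. First I would dispose of the degenerate cases: if some vertex has $L(v)=\emptyset$ then the instance is trivially a no-instance, while any vertex with a singleton list has its colour forced.

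The core step is the encoding. For each vertex~$v$ with $L(v)=\{a_v,b_v\}$ of size exactly two, introduce one Boolean variable~$x_v$ under the convention that setting $x_v$ true means $c(v)=a_v$ and setting $x_v$ false means $c(v)=b_v$; thus every truth assignment selects exactly one admissible colour per vertex, and vice versa. Write $\ell_v^{q}$ for the literal asserting $c(v)=q$, that is $\ell_v^{a_v}=x_v$ and $\ell_v^{b_v}=\neg x_v$. For each edge $uv\in E$ we must forbid $c(u)=c(v)$; equivalently, for every colour $q\in L(u)\cap L(v)$ we add the clause $(\neg \ell_u^{q}\vee \neg \ell_v^{q})$, which rules out both endpoints simultaneously taking colour~$q$. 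Since $|L(u)\cap L(v)|\le 2$, each edge contributes at most two clauses, and the colours already forced at singleton vertices can be propagated (or encoded as unit clauses).

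The correctness is immediate from the construction: a truth assignment satisfies every clause if and only if no edge is monochromatic, so satisfying assignments of the $2$-SAT formula correspond bijectively to proper colourings respecting~$L$. For the running time, the formula has at most $|V|$ variables and $O(|V|+|E|)$ clauses and is built in linear time; combining this with a linear-time $2$-SAT solver gives a linear-time algorithm for {\sc $2$-List Colouring}.

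Since this is a classical reduction, there is no genuine obstacle; the only points requiring care are the clean handling of empty and singleton lists and the verification that each ``differ in colour'' constraint decomposes into at most two $2$-clauses, so that the whole construction---and hence the overall algorithm---remains linear.
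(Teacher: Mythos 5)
Your reduction to {\sc $2$-Satisfiability} is correct and is the standard argument behind this result; the paper itself gives no proof and simply cites Edwards, whose argument is exactly this encoding (one Boolean variable per size-two list, at most two clauses per edge, solved by the linear-time implication-graph algorithm). Your handling of empty and singleton lists and the linear-time accounting are both fine, so nothing further is needed.
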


We are now ready to prove our main result
, namely that {\sc List $3$-Colouring} is polynomial-time solvable for $(P_2+P_5)$-free graphs and for $(P_3+P_4)$-free graphs. As arguments for these two graph classes are overlapping, we prove both cases simultaneously. We start with an outline followed by a formal proof.

\medskip
\noindent
{\it Outline of the proof of Theorem~\ref{t-main}.} Our goal is to reduce, in polynomial time, a given instance $(G,L)$ of {\sc List 3-Colouring}, where $G$ is $(P_2+P_5)$-free or $(P_3+P_4)$-free, to a polynomial number of smaller instances of {\sc 2-List-Colouring} in such a way that $(G,L)$ is a yes-instance if and only if at least one of the new instances is a yes-instance.  As for each of the smaller instances, we can apply Theorem~\ref{t-2sat}, the total running time of our algorithm will be polynomial. 

If $G$ is $P_7$-free, then we do not have to do the above and may apply Theorem~\ref{t-p7} instead. Hence, we assume that $G$ contains an induced $P_7$. We put the vertices of the $P_7$ in a set~$N_0$ and define sets $N_i$ $(i\geq 1$) of vertices of the same distance~$i$ from $N_0$; we say that the sets~$N_i$ are the layers of $G$. We then analyse the structure of these layers using the fact that $G$ is $(P_2+P_5)$-free or $(P_3+P_4)$-free. The first phase of our algorithm is about preprocessing $(G,L)$ after colouring the seven vertices of~$N_0$ and applying a number of propagation rules. We consider every possible colouring of the vertices of $N_0$. In each branch, we may have to deal with vertices $u$ that still have a list $L(u)$ of size~3. We call such vertices active and prove that they all belong to $N_2$. We then enter the second phase of our algorithm. In this phase we show, via some further branching, that $N_1$-neighbours of active vertices either all have a list from $\{\{h,i\},\{h,j\}\}$, where $\{h,i,j\}=\{1,2,3\}$, or they all have
the same list $\{h,i\}$. In the third phase, we reduce, again via some branching, to the situation where only the latter option applies: $N_1$-neighbours of active vertices all have the same list.  Then in the fourth and final phase of our algorithm, we know so much structure of the instance that we can reduce to a polynomial number of smaller instances of
{\sc 2-List-Colouring} via a new propagation rule identifying common neighbourhoods of two vertices by a single vertex.

\medskip
\noindent
{\bf Theorem~\ref{t-main} (restated).}
{\it {\sc List $3$-Colouring} is polynomial-time solvable for $(P_2+P_5)$-free graphs and for $(P_3+P_4)$-free graphs.}

\begin{proof}
\setcounter{ctrclaim}{0}
Let $(G,L)$ be an instance of {\sc List 3-Colouring}, where $G=(V,E)$ is an $H$-free graph for $H\in \{P_2+P_5,P_3+P_4\}$. Note that $G$ is $(P_3+P_5)$-free.
Since the problem can be solved component-wise, we may assume that~$G$ is connected.
If $G$ contains a $K_4$, then $G$ is not 3-colourable, and thus $(G,L)$ is a no-instance. As we can decide if $G$ contains a $K_4$ in $O(n^4)$ time by brute force, we  assume that from now on $G$ is $K_4$-free.
By brute force, we either deduce in $O(n^7)$ time that $G$ is $P_7$-free or we find an induced $P_7$ on vertices
$v_1,\ldots,v_7$ in that order. In the first case, we use Theorem~\ref{t-p7}. 
It remains to deal with the second case. 

\medskip
\noindent
{\bf Definition (Layers).} 
Let $N_0=\{v_1,\ldots,v_7\}.$ For $i\geq 1$, we define $N_i=\{u\: |\; \dist(u,N_0)=i\}$.
We call the sets $N_i$ $(i\geq 0)$ the {\it layers} of $G$.

\medskip
\noindent
In the remainder, we consider $N_0$ to be a fixed set of vertices. That is, we will update $(G,L)$ by applying a number of propagation rules and doing some (polynomial) branching, but we will never delete the vertices of $N_0$. This will enable us to exploit the $H$-freeness of $G$.

We show the following two claims about layers.

\clm{\label{c-v} $V=N_0\cup N_1\cup N_2\cup N_3$.}{\it Proof of Claim~\ref{c-v}.}
Suppose $N_i\neq\emptyset$ for some $i\geq 4$. As $G$ is connected, we may assume that $i=4$. Let $u_4\in N_4$. By definition, there exists two vertices $u_3\in N_3$ and $u_2\in N_2$ such that $u_2$ is adjacent to $u_3$ and $u_3$ is adjacent to $u_4$. Then $G$ has an induced $P_3+P_5$ on vertices $u_2,u_3,u_4, v_1,v_2,v_3,v_4, v_5$, a contradiction.\dia

\clm{\label{c-complete} $G[N_2\cup N_3]$ is the disjoint union of complete graphs of size at most~$3$, each containing at least one vertex of~$N_2$ (and thus at most two vertices of~$N_3$).}{\it Proof of Claim~\ref{c-complete}.}
First assume that $G[N_2\cup N_3]$ has a connected component~$D$ that is not a clique. Then $D$ contains an induced $P_3$, which together with the subgraph  $G[\{v_1,\ldots,v_5\}$] forms an induced $P_3+P_5$, a contradiction. Then the claim follows after recalling that $G$ is $K_4$-free and connected.\dia

\medskip
\noindent
We will now introduce a number of propagation rules, which run in polynomial time. 
We are going to apply these rules on $G$ {\it exhaustively}, that is, until none of the rules can be applied anymore.
Note that during this process some vertices of $G$ may be deleted (due to Rules~\ref{r-one} and~\ref{r-always}), but as mentioned we will ensure that we keep the vertices of~$N_0$, while we may update the other sets $N_i$ $(i\geq 1)$.
 We say that a propagation rule is {\it safe} if
the new instance is a \yes-instance of {\sc List 3-Colouring} if and only if the original instance is so.

\begin{enumerate}[\bf Rule 1.]
	\item \label{r-empty} {\bf(no empty lists)} If $L(u)=\emptyset$ for some $u\in V$, then return \texttt{no}.
		\item \label{r-2sat} {\bf(some lists of size~3)} If $|L(u)|\leq 2$ for every $u\in V$, then apply Theorem~\ref{t-2sat}.
		\item \label{r-disconnected} {\bf(connected graph)} If $G$ is disconnected, then solve {\sc List 3-Colouring}  on each 
instance $(D,L_D)$, where $D$ is a connected component of $G$ that does not contain $N_0$ and $L_D$ is the restriction of $L$ to $D$. If $D$ has no colouring respecting $L_D$, then return \texttt{no}; otherwise remove the vertices of $D$ from $G$.
\item \label{r-one} {\bf(no coloured vertices)} If $u\notin N_0$, $|L(u)|=1$ and $L(u)\cap L(v)= \emptyset$ for all $v\in N(u)$,  then remove $u$ from $G$.
	\item \label{r-one2} {\bf(single colour propagation)} If $u$ and $v$ are adjacent, $|L(u)|=1$, and $L(u)\subseteq L(v)$, then set $L(v) := L(v) \setminus L(u)$.
\item \label{r-diamond} {\bf(diamond colour propagation)} If $u$ and $v$ are adjacent and share two common neighbours $x$ and $y$ with $L(x)\neq L(y)$, then set $L(x):=L(x)\cap L(y)$ and $L(y):=L(x)\cap L(y)$.
	\item \label{r-hood} {\bf(twin colour propagation)} If $u$ and $v$ are non-adjacent, $N(u)\subseteq N(v)$, and $L(v)\subset L(u)$, then set $L(u):=L(v)$.
	\item \label{r-triangle} {\bf(triangle colour propagation)} If $u,v,w$ form a triangle, $|L(u)\cup L(v)|=2$ and $|L(w)|\ge2$, then set $L(w)\df L(w)\setminus (L(u)\cup L(v))$, so $|L(w)|\le1$.
	\item \label{r-always2} {\bf(no free colours)} If $|L(u) \setminus L(N(u))|\geq 1$ and $|L(u)|\geq 2$ for some $u\in V$, then set $L(u):=\{c\}$ for some $c\in L(u) \setminus L(N(u))$.
	\item \label{r-always} {\bf(no small degrees)} If $|L(u)|>|\deg(u)|$ for some $u\in V\setminus N_0$, then remove $u$ from~$G$.\end{enumerate}
As mentioned, our algorithm will branch at several stages to create a number of new but smaller instances, such that the original instance is a yes-instance if and only if at least one of the new instances is a yes-instance.
Unless we explicitly state otherwise, we {\it implicitly} assume that Rules~\ref{r-empty}--\ref{r-always} are applied exhaustively immediately after we branch 
(the reason why we may do this is shown in Claim~\ref{c-safe}).
If we apply Rule~\ref{r-empty} or~\ref{r-2sat} on a new instance, then a no-answer means that we will discard the branch. So our algorithm will only return a no-answer for the original instance $(G,L)$ if we discarded all branches. On the other hand, if we can apply Rule~\ref{r-2sat} on some new instance and obtain a yes-answer, then we can extend the obtained colouring to a colouring of $G$ that respects $L$, simply by restoring all the already coloured vertices that were removed from the graph due to the rules.
We will now state Claim~\ref{c-safe}.

\clm{\label{c-safe} Rules~\ref{r-empty}--\ref{r-always} are safe and their exhaustive application takes polynomial time.
Moreover, if we have not obtained a \yes- or \no-answer, then afterwards $G$ is a connected 
$(H,K_4)$-free graph,
such that $V=N_0\cup N_1\cup N_2\cup N_3$ and $2\leq |L(u)|\leq 3$ for every $u\in V\setminus N_0$.}{\it Proof of Claim~\ref{c-safe}.}
It is readily seen that Rules~\ref{r-empty}--\ref{r-one2} are safe.  For Rule~\ref{r-diamond}, this follows from the fact that any 3-colouring assigns $x$ and $y$ the same colour. For Rule~\ref{r-hood}, this follows from the fact that $u$ can always be recoloured with the same colour as $v$.  For Rule~\ref{r-triangle}, this follows from the fact that the colours from $L(u)\cup L(v)$ must be used on $u$ and $v$.  For Rule~\ref{r-always2}, this follows from the fact that no colour from $L(u) \setminus L(N(u))$ will be assigned to a vertex in $N(u)$. For Rule~\ref{r-always}, this follows from the fact that we always have a colour available for~$u$.

It is readily seen that applying Rules~\ref{r-empty},~\ref{r-2sat} and~\ref{r-one}--\ref{r-always} take polynomial time.  Applying Rule~\ref{r-disconnected} takes polynomial time, as each connected component of $G$ that does not contain $N_0$ is a complete graph on at most three vertices due to the $(H,K_4)$-freeness of $G$ (recall that $H=P_2+P_3$ or $H=P_3+P_4$). Each application of a rule either results in a \no-answer, a \yes-answer, reduces the list size of at least one vertex, or reduces $G$ by at least one vertex. Thus the exhaustive application of the rules takes polynomial time.

Suppose exhaustive application does not yield a \no-answer or a \yes-answer. By Rule~\ref{r-disconnected}, $G$ is connected.  As no vertex of $N_0$ was removed, $G$ contains $N_0$. Hence, we can define $V=N_0\cup N_1\cup N_2\cup N_3$ by Claim~\ref{c-v}. By Rules~\ref{r-one} and~\ref{r-one2},  we find that $2\leq |L(u)|\leq 3$ for every $u\in V\setminus N_0$. It is readily seen that Rules~\ref{r-empty}--\ref{r-always} preserve  $(H,K_4)$-freeness of $G$. \dia

\medskip
\noindent
\subsection*{\bf Phase 1. Preprocessing $\mathbf{(G,L)}$}

\medskip
\noindent
In Phase 1 we will preprocess $(G,L)$ using the above propagation rules. 
To start off the preprocessing we will branch via colouring the vertices of $N_0$ in every possible way. By colouring a vertex $u$, we mean reducing the list of permissible colours to size exactly one. (When $L(u)=\{c\}$, we consider vertex coloured by colour $c$.) Thus, when we colour some vertex $u$, we always give $u$ a colour from its list $L(u)$.
Moreover, when we colour more than one vertex we will always assign distinct colours to adjacent vertices.

\medskip
\noindent
{\bf Branching I} ($O(1)$ branches)\\
We now consider all possible combinations of colours that can be assigned to the vertices in $N_0$. That is, we branch into at most 
$2\cdot 3^6$ cases, in which $v_1, \ldots, v_7$ each receives a colour from their list.
We note that each branch leads to a smaller instance and that $(G,L)$ is a yes-instance if and only if at least one of the new instances is a yes-instance.
Hence, if we applied Rule~\ref{r-empty} in some branch, then we discard the branch. If we applied Rule~\ref{r-2sat} and obtained a no-answer, then we discard the branch as well. If we obtained a yes-answer, then we are done. Otherwise, we continue by considering each remaining branch separately. For each remaining branch, we denote the resulting smaller instance by $(G,L)$ again.

We will now introduce a new rule, namely Rule~\ref{r-third}. We apply Rule~\ref{r-third} together with the other rules. That is, we now apply Rules~\ref{r-empty}--\ref{r-third} exhaustively. However, each time we apply Rule~\ref{r-third} we first ensure that  Rules~\ref{r-empty}--\ref{r-always} have been applied exhaustively.

\begin{enumerate}[\bf Rule 1]
\setcounter{enumi}{10}
\item \label{r-third}  {\bf ($\mathbf{N_3}$-reduction)} If $u$ and $v$ are in $N_3$ and are adjacent, then remove $u$ and $v$
 from~$G$.
\end{enumerate}
\clm{\label{c-safethird} Rule~\ref{r-third}, applied after exhaustive application of Rules~\ref{r-empty}--\ref{r-always}, is safe and takes polynomial time.
Moreover, afterwards $G$ is a connected 
$(H,K_4)$-free graph,
such that $V=N_0\cup N_1\cup N_2\cup N_3$ and $2\leq |L(u)|\leq 3$ for every $u\in V\setminus N_0$.}{\it Proof of Claim~\ref{c-safethird}.}
Assume that we applied Rules~\ref{r-empty}--\ref{r-always} exhaustively and that $N_3$ contains two adjacent vertices $u$ and $v$.  By Claim~\ref{c-complete}, we find that $u$ and $v$ have a common neighbour~$w\in N_2$ and no other neighbours. By Rules~\ref{r-one},~\ref{r-one2} and~\ref{r-always}, we then find that $|L(u)|=|L(v)|=2$. First suppose that $L(u)=L(v)$, say $L(u)=L(v)=\{1,2\}$. Then, by Rule~\ref{r-triangle}, we find that $L(w)=\{3\}$, contradicting Rule~\ref{r-one}.  Hence $L(u)\neq L(v)$, say $L(u)=\{1,2\}$ and $L(v)=\{1,3\}$. By Rule~\ref{r-triangle}, we find that $L(w)=\{2,3\}$ or $L(w)=\{1,2,3\}$. If $w$ gets colour~1, we can give $u$ colour~2 and $v$ colour~3. If $w$ gets colour~2, we can give $u$ colour~1 and $v$ colour~3. Finally, if $w$ gets colour~3, then we can give $u$ colour~2 and $v$ colour~1. Hence we may set $V:=V\setminus \{u,v\}$. This does not destroy the connectivity or $(H,K_4)$-freeness of $G$.\dia

\medskip
\noindent
We now show the following claim.

\clm{\label{c-independent} The set $N_3$ is independent, and moreover, each vertex $u\in N_3$ has $|L(u)|=2$ and exactly two neighbours in $N_2$ which are adjacent.}{\it Proof of Claim~\ref{c-independent}.} 
By Rule~\ref{r-third}, we find that $N_3$ is independent. 
By Claim~\ref{c-complete}, every vertex of $N_3$ has at most two neighbours in $N_2$ and these neighbours are adjacent.
Hence, the claim follows from Rules~\ref{r-one},~\ref{r-one2},~\ref{r-always} and the fact that $N_3$ is independent.
\dia
\begin{figure}[tb]
	\begin{center}
		\includestandalone[mode=image,scale=0.7]{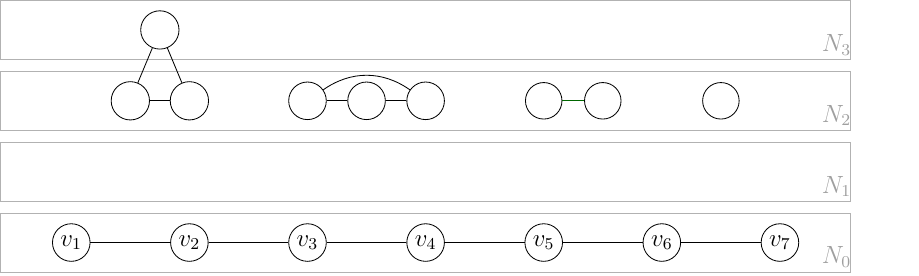}
		\caption{All possible connected components in $G[N_2\cup N_3]$.}\label{f-structure}
	\end{center}
		\small
\end{figure}

\medskip
\noindent
The following claim is an immediate consequence of Claims~\ref{c-complete} and~\ref{c-independent} and gives a complete description of the second and third layer, see also Figure~\ref{f-structure}.

\clm{\label{c-layers} Every connected component $D$ of 
$G[N_2\cup N_3]$ is a complete graph with either $|D|\leq 2$ and $D\subseteq N_2$, or $|D|=3$ and $|D\cap N_3|\leq 1$.}

\noindent
The following claim describes the location of the vertices with list of size~3 in $G$.

\clm{\label{c-n2} For every $u\in V$, if $|L(u)|=3$, then $u\in N_2$.}{\it Proof of Claim~\ref{c-n2}.}
As the vertices in $N_0$ have lists of size~1, the vertices in $N_1$ have lists of size~2.  By Claim~\ref{c-independent}, the same holds for vertices in $N_3$.\dia

\medskip
\noindent
In the remainder of the proof, we will show how to branch in order to reduce
the lists of the vertices $u\in N_2$ with $|L(u)|=3$ by at least one colour. 
We formalize this approach in the following definition.

\medskip
\noindent
{\bf Definition (Active vertices).}
A vertex $u\in N_2$ and its neighbours in $N_1$ are called \emph{active} if $|L(u)|=3$.
Let $\A$ be the set of all active vertices.  Let $\A_1=\A\cap N_1$ and $\A_2= \A\cap N_2$. 
We \emph{deactivate} a vertex $u\in A_2$ if we reduce the list $L(u)$ by at least one colour. We \emph{deactivate} a vertex
$w\in A_1$ by deactivating all its neighbours in $A_2$. 

\medskip
\noindent
Note that every vertex $w\in A_1$ has $|L(w)|=2$ by Rule~\ref{r-one2} applied on the vertices of $N_0$. Hence, if we reduce $L(w)$ by one colour, all neighbours of $w$ in $A_2$ become deactivated by Rule~\ref{r-one2}, and $w$ is removed by Rule~\ref{r-one}.

For $1\leq i<j\leq 7$, we let $A(i,j)\subseteq A_1$ be the set of active neighbours of $v_i$ that are not adjacent to $v_j$ and similarly, we let $A(j,i)\subseteq A_1$ be the set of active neighbours of $v_j$ that are not adjacent to $v_i$.

\medskip
\noindent
\subsection*{\bf Phase 2. Reduce the number of distinct sets $\mathbf{A(i,j)}$}

\medskip
\noindent
We will now branch into $O(n^{45})$ smaller instances such that $(G,L)$ is a yes-instance of {\sc List 3-Colouring} if and only if at least one of these new instances is a yes-instance. Each new instance will have the following property:

\begin{description}
\item[(P\label{c-first*})]\label{cond:priv-act*} for $1\leq i\leq j\leq 7$ with $j-i\geq 2$, either $A(i,j)=\emptyset$ or $A(j,i)=\emptyset$.
\end{description}

\noindent
{\bf Branching II} ($O\bigl(n^{\bigl(3\cdot\left(\binom{7}{2}-6\right)\bigr)}\bigr)=O(n^{45})$ branches)\\
Consider two vertices $v_i$ and $v_j$ with $1\leq i\leq j\leq 7$ and $j-i\geq 2$.
Assume without loss of generality that $v_i$ is coloured~3 and that $v_j$ is coloured either~1 or~3. 
Hence, every $w\in A(i,j)$ has 
$L(w)=\{1,2\}$, whereas every $w\in A(j,i)$ has $L(w)=\{2,q\}$ for $q\in \{1,3\}$.
We branch as follows.
We consider all possibilities where at most one vertex of $A(i,j)$ receives colour~2 (and all other vertices of $A(i,j)$ receive colour~1) and all possibilities where we choose two vertices from $A(i,j)$ to receive colour~2.
This leads to $O(n)+O(n^2)=O(n^2)$ branches.
In the branches where at most one vertex of $A(i,j)$ receives colour~2,  every vertex of $A(i,j)$ will be deactivated.
So Property~{\bf (P)} is satisfied for $i$ and $j$. 

Now consider the branches where two vertices $x_1,x_2$ of $A(i,j)$ both received colour~2.  We update $A(j,i)$ accordingly. In particular, afterwards no vertex in $A(j,i)$ is adjacent to $x_1$ or $x_2$, as $2$ is a colour in the list of each vertex of $A(j,i)$. 
We now do some further branching for those branches where $A(j,i)\neq \emptyset$. 
We consider the possibility where each vertex of $N(A(j,i))\cap A_2$ is given the colour of $v_j$ and all possibilities where we choose one vertex in $N(A(j,i))\cap A_2$ to receive a colour different from the colour of $v_j$ (we consider both options to colour such a vertex).
This leads to $O(n)$ branches. In the first branch, every vertex of $A(j,i)$ will be deactivated.
So Property~{\bf (P)} is satisfied for $i$ and $j$. 

Now consider a branch where a vertex $u\in N(A(j,i))\cap A_2$ receives a colour different from the colour of $v_j$. 
We will show that also, in this case, every vertex of $A(j,i)$ will be deactivated.
For contradiction, assume that $A(j,i)$ contains a vertex $w$ that is not deactivated after colouring~$u$. 
As $u$ was in $N(A(j,i))\cap A_2$, we find that $u$ had a neighbour $w'\in A(j,i)$. 
As $u$ is coloured with a colour different from the colour of $v_j$, the size of $L(w')$ is reduced by one (due to Rule~\ref{r-one}). Hence $w'$ got deactivated after colouring~$u$, and thus $w'\neq w$. As $w$ is still active, $w$ has a neighbour $u'\in A_2$. As $u'$ and $w$ are still active, $u'$ and $w$ are not adjacent to $w'$ or $u$.
 Hence, $u,w',v_j,w,u'$ induce a $P_5$ in $G$. As $x_1$ and $x_2$ both received colour~2, we find that $x_1$ and $x_2$ are not adjacent to each other.
Hence, $x_1,v_i,x_2$ induce a $P_3$ in $G$. Recall that all vertices of $A(j,i)$, so also $w$ and $w'$, are not adjacent to $x_1$ or $x_2$.  As $u$ and $u'$ were still active after colouring $x_1$ and $x_2$, we find that $u$ and $u'$ are not adjacent to $x_1$ or $x_2$ either.
By definition of $A(j,i)$, $w$ and $w'$ are not adjacent to $v_i$. By definition of $A(i,j)$, $x_1$ and $x_2$ are not adjacent to $v_j$.
 Moreover, $v_i$ and $v_j$ are non-adjacent, as $j-i\geq 2$. We conclude that $G$ contains an induced $P_3+P_5$, namely with vertex 
 set $\{x_1,v_i,x_2\} \cup \{u,w',v_j,w,u'\}$, a contradiction (see Figure~\ref{f-phase2} for an example of such a situation). 
Hence, every vertex of $A(j,i)$ is deactivated. So Property~{\bf (P)} is satisfied for $i$ and $j$ also for these branches.

Finally by recursive application of the above described procedure for all pairs $v_i, v_j$ such that $1\leq i\leq j\leq 7$ and $j-i\geq 2$ we get a graph satisfying Property~{\bf (P)}, which together leads to $O\bigl(n^{\bigl(3\cdot\left(\binom{7}{2}-6\right)\bigr)}\bigr)=O(n^{45})$ branches.

\begin{figure}[tb]
	\begin{center}
		\includestandalone[mode=image,scale=0.9]{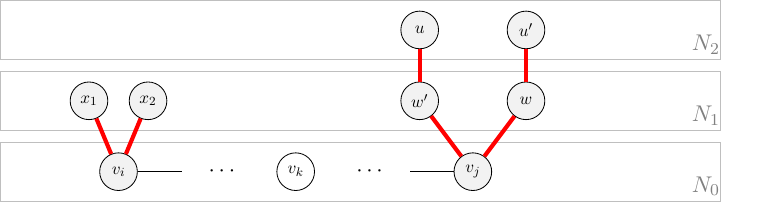}
		\caption{The situation in Branching II.}\label{f-phase2}
	\end{center}
		\small
\end{figure}

\medskip
\noindent
We now consider each resulting instance from Branching II. 
We denote such an instance by $(G,L)$ again. Note that vertices from $N_2$ may now belong to $N_3$, as their neighbours in $N_1$ may have been removed due to the branching.
The exhaustive application of Rules~\ref{r-empty}--~\ref{r-third} preserves~{\bf (P)} (where we apply
Rule~\ref{r-third} only after applying Rules~\ref{r-empty}--\ref{r-always} exhaustively). Hence $(G,L)$ satisfies~{\bf (P)}. 

We observe that if two vertices in $A_1$ have a different list, then they must be adjacent to different vertices of $N_0$. Hence, by Property~{\bf (P)}, at most two lists of $\{\{1,2\},\{1,3\},\{2,3\}\}$  can occur as lists of vertices of $A_1$. Without loss of generality this leads to two cases: either every vertex of $A_1$ has 
list $\{1,2\}$ or $\{1,3\}$ and both lists occur on $A_1$; or every vertex of $A_1$ has list $\{1,2\}$ only. 
In the next phase of our algorithm, we reduce, via some further branching, every instance of the first case to a polynomial number of smaller instances of the second case.

\medskip
\noindent

\subsection*{\bf Phase 3. Reduce to the case where vertices of $\mathbf{A_1}$ have the same list}

\medskip
\noindent
Recall that we assume that every vertex of $A_1$ has list $\{1,2\}$ or $\{1,3\}$. In this phase, we deal with the case when both types of lists occur in $A_1$. We first prove the following claim.

\clm{\label{c-oddPathVertices} 
	Let $i\in \{1,3,5,7\}$. Then every vertex from $A_1 \cap N(v_i)$ is adjacent to some vertex $v_j$ with $j \not\in \{i-1,i,i+1\}$.
}{\it Proof of Claim~\ref{c-oddPathVertices}.} 
We may assume without loss of generality that $i=1$ or $i=3$. For contradiction suppose there exists a vertex $w \in A_1 \cap N(v_i)$ that is non-adjacent to all $v_j$ with  $j \not\in \{i-1,i,i+1\}$. As two consecutive vertices in $N_0$ have different colours,  no vertex in $A_1$ has two consecutive neighbours in $N_0$ due to Rules~\ref{r-one} and~\ref{r-one2}. Hence $N(w) \cap N_0 = \{v_i\}$. By definition, $w$ has a neighbour $u\in A_2$.  If $i=1$, then $\left\{ u,w,v_1,v_2,v_3\right\} \cup \left\{ v_5,v_6,v_7 \right\}$ induces a $P_3+P_5$ in $G$. If $i=3$, then $\left\{ v_1,v_2,v_3,w,u \right\} \cup \left\{ v_5,v_6,v_7 \right\}$ induces a $P_3+P_5$ in $G$. \dia
 
\clm{\label{c-structure} It holds that $N(A_1)\cap N_0=\{v_{i-1},v_{i},v_{i+1}\}$ for some $2\leq i\leq 6$.
Moreover, we may assume without loss of generality  that $v_{i-1}$ and $v_{i+1}$ have colour~$3$ and both are adjacent to all vertices of $A_1$ with list $\{1,2\}$,
whereas $v_{i}$ has colour~$2$ and is adjacent to all vertices of $A_1$ with list $\{1,3\}$.}{\it Proof of Claim~\ref{c-structure}.} 
Recall that lists $\{1,2\}$ and $\{1,3\}$ both occur on $A_1$. 
For any two vertices $x\in A_1$ with $L(x)=\{1,2\}$ and $y\in A_1$ with $L(y)=\{1,3\}$, there exist indexes $i,j$ such that $x\in A(i,j)$
and $y\in A(j,i)$ 
(namely, $x$ is adjacent to some vertex $v_i$ with colour~$3$ and $y$ is adjacent to some vertex~$v_j$ with colour~$2$).
Note that $x$ and $y$ share no neighbour in $N_0$.
By using Property~{\bf (P)}, we find that each vertex of $N(x)\cap N_0$ must be adjacent to each vertex of $N(y)\cap N_0$. 
We conclude that either $N(A_1)\cap N_0=\{v_{i-1},v_{i}\}$ for some $2\leq i\leq 7$, or $N(A_1)\cap N_0=\{v_{i-1},v_{i},v_{i+1}\}$ for some $2\leq i\leq 6$.

The case where $N(A_1)\cap N_0=\{v_{i-1},v_{i}\}$ for some $2\leq i\leq 7$ is not possible due to Claim~\ref{c-oddPathVertices}. It follows that $N(A_1)\cap N_0=\{v_{i-1},v_{i},v_{i+1}\}$ for some $2\leq i\leq 6$. We may assume without loss of generality that $v_{i}$ has colour~$2$, meaning that $v_{i-1}$ and $v_{i+1}$ must have colour~3. It follows that every vertex of $A_1$ with list $\{1,3\}$ is adjacent to $v_i$ but not to $v_{i-1}$ or $v_{i+1}$, whereas every vertex of $A_1$ with list $\{1,2\}$ is adjacent to at least one vertex of $\{v_{i-1},v_{i+1}\}$ but not to $v_i$. As a vertex of $A_1$ with list $\{1,3\}$ has $v_i$ as its only neighbour in $N_0$, it follows from Claim~\ref{c-oddPathVertices} that $i$ is an even number. This means that $i-1$ is odd. Hence, every vertex of $A_1$ with list $\{1,2\}$ is in fact adjacent to both $v_{i-1}$ and $v_{i+1}$  due to Claim~\ref{c-oddPathVertices}.\dia

\begin{figure}[tb]
	\begin{center}
		\includestandalone[mode=image]{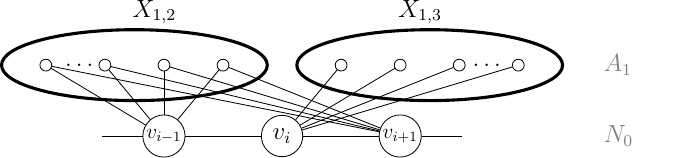}
		\caption{The situation after Claim~\ref{c-structure}.}\label{f-phase3-situation}
	\end{center}
		\small
\end{figure}

\medskip
\noindent
By Claim~\ref{c-structure}, we can partition the set $A_1$ into two (non-empty) sets
$X_{1,2}$ and $X_{1,3}$, where $X_{1,2}$ is the set of vertices in $A_1$ with list $\{1,2\}$ whose only neighbours in $N_0$ are  $v_{i-1}$ and $v_{i+1}$
(which both have colour~3) and $X_{1,3}$ is the set of vertices in $A_1$ with list $\{1,3\}$ whose only neighbour in $N_0$ is $v_i$ (which has colour~2), see Figure~\ref{f-phase3-situation}.

Our goal is to show that we can branch into at most $O(n^2)$ smaller instances, in which either $X_{1,2}=\emptyset$ or $X_{1,3}=\emptyset$, such that $(G,L)$ is a yes-instance of 
{\sc List 3-Colouring} if and only if at least one of these smaller instances is a yes-instance. Then afterwards it suffices to show how to deal with the case where all vertices in $A_1$ have the same list in polynomial time; this will be done in Phase~4 of the algorithm. We start with the following $O(n)$ branching procedure (in each of the branches we may do some further $O(n)$ branching later on).

\medskip
\noindent
{\bf Branching III} ($O(n)$ branches)\\
We branch by considering the possibility of giving each vertex in $X_{1,2}$ colour~2 and all possibilities of choosing a vertex in $X_{1,2}$ and giving it colour~1.
 This leads to $O(n)$ branches. In the first branch we obtain $X_{1,2}=\emptyset$. Hence
we can start Phase 4 for this branch. We now consider every branch in which $X_{1,2}$ and $X_{1,3}$ are both nonempty. 
For each such branch we will create $O(n)$ smaller instances of {\sc List 3-Colouring}, where $X_{1,3}=\emptyset$, such
that $(G,L)$ is a yes-instance of {\sc List 3-Colouring} if and only if at least one of the new instances is a yes-instance.

Let $w\in X_{1,2}$ be the vertex that was given colour~1 in such a branch. Although by Rule~\ref{r-one} vertex~$w$ will need to be removed from $G$, we make an exception by temporarily keeping $w$ after we coloured it. The reason is that the presence of $w$ will be helpful for analysing the structure of $(G,L)$ after Rules~\ref{r-empty}--\ref{r-third} have been applied exhaustively
(where we apply Rule~\ref{r-third} only after applying Rules~\ref{r-empty}--\ref{r-always} exhaustively). 
In order to do this, we first show the following three claims.

\clm{\label{c-p1} Vertex $w$ is not adjacent to any vertex in $A_2\cup X_{1,2}\cup X_{1,3}$.}{\it Proof of Claim~\ref{c-p1}.} 
By giving $w$ colour~1, the list of every neighbour of $w$ in $A_2$ has been reduced by one due to Rule~\ref{r-one2}. Hence, all neighbours of $w$ in $A_2$ are deactivated.  For the same reason all neighbours of $w$ in $X_{1,2}$, which have list $\{1,2\}$, are coloured~2, and all neighbours of $w$ in $X_{1,3}$, which have list $\{1,3\}$, are coloured~3. These vertices were removed from the graph by Rule~\ref{r-one}. This proves the claim. \dia

\clm{\label{c-p2} The graph $G[X_{1,3}\cup (N(X_{1,3})\cap A_2)\cup N_3]$ is the disjoint union of one or more complete graphs, each of which consists of either one vertex of $X_{1,3}$ and at most two vertices of $A_2$, or one vertex of $N_3$.}{\it Proof of Claim~\ref{c-p2}.} 
We write $G^*=G[X_{1,3}\cup (N(X_{1,3})\cap A_2)\cup N_3]$ and first show that $G^*$ is the disjoint union of one or more complete graphs. For contradiction, assume that $G^*$ is not such a graph. Then $G^*$ contains an induced $P_3$, say on vertices $u_1,u_2,u_3$ in that order. As $w\in X_{1,2}\subseteq N_1$, we find that $w$ is not adjacent to any vertex of $N_3$. By Claim~\ref{c-p1}, we find that $w$ is not adjacent to any vertex of $A_2\cup X_{1,3}$.  Recall that $v_{i-1}$ and $v_{i+1}$ are the only neighbours of $w$ in $N_0$, whereas $v_i$ is the only neighbour of the vertices of $X_{1,3}$ in $N_0$. Hence, $\{u_1,u_2,u_3\} \cup \{v_1, \ldots, v_{i-1},w,v_{i+1},\ldots,v_7\}$ induces a $P_3+P_7$. This contradicts the $(P_3+P_5)$-freeness of $G$. We conclude that $G^*$ is the disjoint union of one or more complete graphs.

As $G$ is $K_4$-free, the above means that every connected component of $G^*$ is a complete graph on at most three vertices.
No vertex of $N_3$ is adjacent to a vertex in $X_{1,3}\subseteq N_1$. Moreover, by definition, every vertex of $N(X_{1,3})\cap A_2$ is adjacent to at least one vertex of $X_{1,3}$. As every connected component of $G^*$ is a complete graph, this means that no vertex of $N_3$ is adjacent to a vertex of $N(X_{1,3})\cap A_2$ either. We conclude that the vertices of $N_3$ are isolated vertices of $G^*$.

Let $D$ be a connected component of $G^*$ that does not contain a vertex of $N_3$. From the above we find that $D$ is a complete graph on at most three vertices. By definition, every vertex in $X_{1,3}$ has a neighbour in $A_2$ and every vertex of $N(X_{1,3})\cap A_2$ has a neighbour in $X_{1,3}$. This means that $D$ either consists of one vertex in $X_{1,3}$ and at most two vertices of $A_2$, or $D$ consists of two vertices of $X_{1,3}$ and one vertex of $A_2$. We claim that the latter case is not possible. For contradiction, assume that $D$ is a triangle that consists of three vertices $s,u_1,u_2$, where $s\in A_2$ and $u_1,u_2\in X_{1,3}$. However, as $L(u_1)=L(u_2)=\{1,3\}$, we find that $|L(s)|=1$ by Rule~\ref{r-triangle}, contradicting the fact that $s$ belongs to $A_2$. This completes the proof of the claim. \dia

\clm{\label{c-p3} For every pair of adjacent vertices $s,t$ with $s\in A_2$ and $t\in N_2$, either $t$ is adjacent to $w$, or 
$N(s) \cap X_{1,3} \subseteq N(t)$.}
\textit{Proof of Claim~\ref{c-p3}.}
For contradiction, assume that $t$ is not adjacent to $w$ and that there is a vertex $r \in X_{1,3}$ that is adjacent to $s$ but not to $t$. By Claim~\ref{c-p1}, we find that $w$ is not adjacent to $r$ or $s$. Just as in the proof of Claim~\ref{c-p2}, we find that $\{r,s,t\}$ together with  $\{v_1,,\ldots,v_{i-1},w,v_{i+1},\ldots,v_7\}$ induces a $P_3+P_7$ in $G$, a contradiction. \dia

\medskip
\noindent
We now continue as follows. Recall that $X_{1,3}\neq \emptyset$. Hence there exists a vertex $s\in A_2$ that has a neighbour $r\in X_{1,3}$. As $s\in A_2$, we have that $|L(s)|=3$.
Then, by Rule~\ref{r-always}, we find that $s$ has at least two neighbours $t$ and $t'$ not equal to $r$. By Claim~\ref{c-p2},
we find that neither $t$ nor $t'$ belongs to $X_{1,3}\cup N_3$.
We are going to fix an induced 3-vertex path~$P^s$ of $G$, over which we will branch, in the following way.

If $t$ and $t'$ are not adjacent, then we let $P^s$ be the induced path in $G$ with vertices $t,s,t'$ in that order.
Suppose that $t$ and $t'$ are adjacent. As $G$ is $K_4$-free and $s$ is adjacent to $r,t,t'$, at least one of $t,t'$ is not adjacent to $r$. We may assume without loss of generality that $t$ is not adjacent to $r$.

First assume that $t\in N_2$. Recall that $s$ has a neighbour in $X_{1,3}$, namely~$r$, and that $r$ is not adjacent to $t$. We then find that $t$ must be adjacent to $w$ by Claim~\ref{c-p3}. As $s\in A_2$, we find that $s$ is not adjacent to $w$ by 
Claim~\ref{c-p1}. In this case we let $P^s$ be the induced path in $G$ with vertices $s,t,w$ in that order.

Now assume that $t\notin N_2$. Recall that $t\notin N_3$. Hence, $t$ must be in $N_1$. Then, as 
$t\notin X_{1,3}$ but $t$ is adjacent to a vertex in $A_2$, namely $s$, we find that $t\in X_{1,2}$. 
Recall that $t'\notin X_{1,3}$. If $t'\in N_1$ then the fact that $t'\notin X_{1,3}$, combined with the fact that $t'$ is adjacent to $s\in A_2$, implies that $t'\in X_{1,2}$. However, by Rule~\ref{r-triangle} applied on $s,t,t'$, vertex~$s$ would have a list of size~1 instead of size~3, a contradiction. Hence, $t'\notin N_1$. As $t'\notin N_3$, this means that $t'\in N_2$. If $t'$ is adjacent to $r$, then $t\in X_{1,2}$ with $L(t)=\{1,2\}$ and $r\in X_{1,3}$ with $L(r)=\{1,3\}$ would have the same lists by Rule~\ref{r-diamond} applied on $r,s,t,t'$, a contradiction. Hence $t'$ is not adjacent to $r$.
Then, by Claim~\ref{c-p3}, we find that $t'$ must be adjacent to $w$. Note that $s$ is not adjacent to $w$ due to Claim~\ref{c-p1}.
In this case we let $P^s$ be the induced path in $G$ with vertices $s,t',w$ in that order.

We conclude that either $P^s=tst'$ or $P^s=stw$ or $P^s=st'w$. We are now ready to apply another round of branching.

\medskip
\noindent
{\bf Branching IV} ($O(n)$ branches)\\
We branch by considering the possibility of removing colour~2 from the list of each vertex in 
$N(X_{1,3})\cap A_2$
 and all possibilities of choosing a vertex in 
 $N(X_{1,3})\cap A_2$ and giving it colour~2. 
In the branch where we removed colour~2 from the list of every
vertex in 
 $N(X_{1,3})\cap A_2$, we obtain that $X_{1,3}=\emptyset$. Hence for that branch we can enter Phase~4.
Now consider a branch where we gave some vertex 
$s\in N(X_{1,3})\cap A_2$ colour~2. 
Let $P^{s}=tst'$ or $P^{s}=stw$ or $P^{s}=st'w$.
We do some further branching by considering all possibilities of colouring the vertices of $P^{s}$ that are not equal to the already coloured vertices $s$ and $w$ (should $w$ be a vertex of $P^{s}$) 
and all possibilities of giving a colour to the vertex from $N(s) \cap X_{1,3}$ (recall that by Claim~\ref{c-p2}, $|N(s) \cap X_{1,3}|=1$).
This leads to a total of $O(n)$ branches.
We claim that in each of these branches, the size of $X_{1,3}$ has reduced to at most~1. 

\begin{figure}[tb]
	\begin{center}
		\includestandalone[mode=image,scale=0.9]{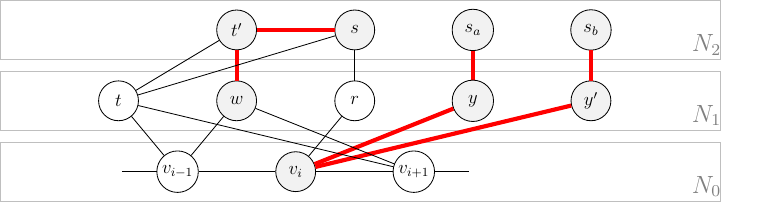}
		\caption{The situation in Branching IV if $t_1\in N_1$ and if vertices $s_a$ and $s_b$ exist.}\label{f-phase3}
	\end{center}
		\small
\end{figure}

For contradiction, assume that there exists a branch where $X_{1,3}$ contains two vertices $y$ and $y'$.
Let $s_a$ and $s_b$ be the neighbours of $y$ and $y'$ in $A_2$, respectively. By Claim~\ref{c-p2}, the graph induced by
$\{y,y',s_a,s_b\}$ is isomorphic to $2P_2$. Hence, the set $\{s_a, y, v_{i}, y', s_b\}$ induces a $P_5$ in $G$.
Recall that $P^{s}=tst'$ or $P^{s}=stw$ or $P^{s}=st'w$.
As $s_a$ and $s_b$ have a list of size~3, neither $s_a$ nor $s_b$ is adjacent to a vertex of $P^{s}$ due to Rule~\ref{r-one2}.
Neither $y$ nor $y'$ is adjacent to $N(s)\cap X_{1,3}$, as $N(s)\cap X_{1,3}$ is already coloured.
By Claims~\ref{c-p1} and~\ref{c-p2}, neither $y$ nor $y'$ is adjacent to $w$ or $s$, respectively.
As $s$ received colour~2,
vertices $t$ and $t'$ have received colour~1 or~3 should they belong to $P^{s}$. In that case neither $t$ nor $t'$ can be adjacent to $y$ or $y'$, as $L(y)=L(y')=\{1,3\}$. By definition, $v_{i}$ is not adjacent to $s$ or $w$. Moreover, $v_{i}$ can only be adjacent to a vertex from $\{t,t'\}$ if  that vertex belonged to $N_1$. However, recall that $t$ and $t'$ were not in $X_{1,3}$ while $s$ was an active vertex. Hence if $t$ or $t'$ belonged to $N_1$, they must have been in $X_{1,2}$ and thus not adjacent to $v_{i}$. This means that the vertices of $P^{s}$, together with $\{s_a,y,v_{i},y',s_b\}$, induce a $P_3+P_5$ in $G$, a contradiction (see Figure~\ref{f-phase3} for an example of such a situation). 
Thus $X_{1,3}$ must contain at most one vertex.

\medskip
\noindent
{\bf Branching V} ($O(1)$ branches)\\	
We branch by considering both possibilities of colouring the unique vertex of $X_{1,3}$. This leads to two new but smaller instances of {\sc List 3-Colouring}, in each of which the set $X_{1,3}=\emptyset$. Hence, our algorithm can enter Phase~4.

\medskip
\noindent

\subsection*{\bf Phase 4. Reduce to a set of instances of 2-List Colouring}

\medskip
\noindent
Recall that in this stage of our algorithm we have an instance $(G,L)$ in which 
every vertex of $A_1$ has the same list, say $\{1,2\}$. 
We deal with this case as follows. First suppose that $H=P_2+P_5$. Then $G[N_2\cup N_3]$ is an independent set, as otherwise two adjacent vertices of $N_2\cup N_3$ form, together with $v_1,\ldots,v_5$, an induced $P_2+P_5$. Hence, we can safely colour each vertex in $A_2$ with colour~3, and afterwards we may apply Theorem~\ref{t-2sat}.

Now suppose that $H=P_3+P_4$.
We first introduce two new rules, which turn $(G,L)$ into a smaller instance. In Claims~\ref{c-safe2} and~\ref{c-uNeighbors} we show that we may include those rules in our set of propagation rules that we apply implicitly every time we modify the instance $(G,L)$.

\begin{enumerate}[\bf Rule 1]
\setcounter{enumi}{11}
\item \label{r-identify} {\bf (neighbourhood identification)} If $u$ and $v$ are adjacent, $N(v)\subseteq N[u]$, $N(u)\cap N(v)\neq \emptyset$, and $|L(v)|=3$, then identify 
$N(u)\cap N(v)$ by $w$, set $L(w):=\bigcap\{L(x)\; |\;x\in N(u)\cap N(v)\}$ and remove $v$ from $G$.
If $G$ contains a $K_4$, then return {\tt no}.
\end{enumerate}

\noindent
We note that the case where $u$ and $v$ are adjacent, $N(v)\subseteq N[u]$, and $N(u)\cap N(v)= \emptyset$ implies that $N(v)=\{u\}$, and thus $\deg(v)=1$.
Therefore, this case was already handled by one of the Rules~\ref{r-empty},~\ref{r-one}--\ref{r-one2}, or~\ref{r-always}.
Whenever we refer to Rule~\ref{r-identify} we always assume that the previous rules were applied meaning that we will 
implicitly assume that $N(u)\cap N(v)\neq \emptyset$.

\clm{\label{c-safe2} Rule~\ref{r-identify} is safe for $K_4$-free input, takes polynomial time and does not affect any vertex of $N_0$.
Moreover, if we have not obtained a \no-answer, then afterwards $G$ is a connected $(H, K_4)$-free graph, in which
we can define sets $N_1,N_2,N_3,A_1,A_2$ as before.}{\it Proof of Claim~\ref{c-safe2}.} 
Note that by Claim~\ref{c-safe}, $G$ is $K_4$-free before the application of Rule~\ref{r-identify}. Hence $N(u)\cap N(v)$ is an independent set. Let $w$ be the new vertex obtained from identifying $N(u)\cap N(v)$. Observe that every vertex in the common neighbourhood of two adjacent vertices must receive the same colour. Hence $w$ can be given the same colour as any vertex of $N(u)\cap N(v)$, which belongs to $\bigcap\{L(x)\; |\;x\in N(u)\cap N(v)\}$. For the reverse direction, we give each vertex $x\in N(u)\cap N(v)$ the colour of $w$, which belongs to $L(x)$ by definition. As $|L(v)|=3$ and $N(v)\setminus N(u)=\{u\}$, we have a colour available for $v$.  The above means that $(G,L)$ is a no-instance if a $K_4$ is created. We conclude that Rule~\ref{r-identify} is safe and either yields a no-instance if a $K_4$ was created, or afterwards we have again that $G$ is $K_4$-free.

It is readily seen that applying Rule~\ref{r-identify} takes polynomial time and that afterwards $G$ is still connected. As $|L(v)|=3$, Claim~\ref{c-n2} tells us that $v\in N_2$, and thus $N(v)\subseteq N_1\cup N_2\cup N_3$. Thus Rule~\ref{r-identify} does not involve any vertex of $N_0$. Hence, as $G$ is connected, we can define $V=N_0\cup N_1\cup N_2\cup N_3$ by Claim~\ref{c-v}.

It remains to prove that $G$ is $H$-free after applying Rule~\ref{r-identify}.  For contradiction, assume that $G$ has an induced subgraph $P+P'$ isomorphic to $H$. 
Then we find that the vertex $w$ created by Rule~\ref{r-identify} must be in $V(P)\cup V(P')$,
as otherwise, $P+P'$ was already an induced subgraph of $G$ before Rule~\ref{r-identify} was applied.
We assume, without loss of generality, that $w$ belongs to $V(P)$.
By the same argument, we find that $w$ is incident with two edges $wx$ and $wy$ in $P$ that correspond to edges $sx$ and $ty$ with $s\neq t$ in~$G$ before Rule~\ref{r-identify} was applied 
(where $s$ and $t$ belonged to the set of the vertices identified by $w$). 
However, then we can replace $P$ by the path $xsvty$ to find again that $G$ already contained a copy of $H$ before Rule~\ref{r-identify} was applied.
This copy was induced since $s,t$ were not adjacent, as otherwise, 
$u,v,s,t$ would have induced a $K_4$. 
Hence, we obtained a contradiction.\dia

\medskip
\noindent
Let $u\in A_2$. We let $B(u)$ be the set of neighbours of $u$ that have colour~3 in their list. 

\clm{\label{c-uv}
For every $u\in A_2$, it holds that $B(u)\neq \emptyset$ and 
$B(u)\subseteq N_2\cup N_3.$}{\it Proof of Claim~\ref{c-uv}.}
By Rule~\ref{r-always2}, there is a vertex $v\in N(u)$ such that $3\in L(v)$. 
Vertex $v$ cannot be in $N_1$; otherwise the edge $uv$ implies that $v\in A_1$ and thus $v$ would have list $\{1,2\}$.
This means that $v$ must be in $N_2\cup N_3$.
\dia

\medskip\noindent
We will use the following rule (in Claim~\ref{c-uNeighbors} we show that the colour~$q$ is unique).

\begin{enumerate}[\bf Rule 1]
\setcounter{enumi}{12}
\item \label{r-three} {\bf ($\mathbf{A_2}$ list-reduction)}
If a vertex $v\in B(u)$ for some $u\in A_2$ has no neighbour outside $N[u]$, then remove colour $q$ from $L(u)$ for $q\in L(v)\setminus \{3\}$.
\end{enumerate}
\clm{\label{c-uNeighbors}
Rule~\ref{r-three} is safe, takes polynomial time and does not affect any vertex of $N_0$.
Moreover, afterwards $G$ is a connected $(H, K_4)$-free graph, in which
we can define sets $N_1,N_2,N_3,A_1,A_2$ as before.}{\it Proof of Claim~\ref{c-uNeighbors}.}
Let $u$ be a vertex in $A_2$ for which there exists a vertex $v\in B(u)$ with no neighbour outside $N[u]$. It is readily seen that Rule~\ref{r-three} applied on $u$ takes polynomial time, does not affect any vertex of $N_0$, and afterwards we can define sets $N_1,N_2,N_3,A_1,A_2$ as before.

We recall by Claim~\ref{c-uv} that $v\in N_2\cup N_3$. As $N(v)\setminus N[u]=\emptyset$, we find by Rule~\ref{r-identify} that $|L(v)|\neq 3$. Then, by Rule~\ref{r-one}, it holds that $|L(v)|=2$.  Thus vertex $v$ has $L(v)=\{q,3\}$ for some $q\in{\{1,2\}}$. If there exists a colouring $c$ of $G$ with $c(u)=q$ that respects $L$, then $c(v)=3$, and so $c$ colours each vertex in $N(v)\cap N(u)$ with a colour from $\left\{ 1,2 \right\}$. 

We define a colouring $c'$ by setting $c'(u)=3$, $c'(v)=q$ and $c'= c$ for $V(G)\setminus \{u,v\}$.  We claim that $c'$ also respects $L$. As $N(v)\setminus N[u]=\emptyset$, every neighbour $w\neq u$ of $v$ is a neighbour of $u$ as well and thus received a colour $c'(w)=c(w)$ that is not equal to colour $q$ (and colour~$3$). As $v\in N_2\cup N_3$ by Claim~\ref{c-uv}, all vertices in $N(u)\setminus N[v]$ are in $N_1$ by Claim~\ref{c-complete}. As $u\in A_2$, these vertices all belong to $A_1$ and thus their lists are equal to $\{1,2\}$, so do not contain colour 3. Hence, $c'$ respects $L$ indeed.  

The above means that  we can avoid assigning colour~$q$ to $u$. We may therefore remove~$q$ from $L(u)$. This completes the proof of the claim. \dia

\medskip
\noindent
We note that if a colour~$q$ is removed from the list of some vertex~$u\in A_2$ due to Rule~\ref{r-three}, then $u$ is no longer active. 

Assume that Rules~\ref{r-empty}--\ref{r-three} have been applied exhaustively.
By Rule~\ref{r-2sat}, we find that $A_2\neq \emptyset$.
Then we continue as follows.
Let $u\in A_2$ and $v\in B(u)$ (recall that $B(u)$ is nonempty due to Claim~\ref{c-uv}).
Let $A(u,v)\subseteq N_1$ be the set of (active) neighbours of $u$ that are not adjacent to $v$. Note that $A(u,v)\subseteq A_1$ by definition.
Let $A(v,u)\subseteq N_1$ be the set of neighbours of $v$ that are not adjacent to $u$.
We claim that both $A(u,v)$ and $A(v,u)$ are nonempty. 
By Rule~\ref{r-three}, we find that $A(v,u)\neq \emptyset$. By Rule~\ref{r-identify}, vertex $u$ has a neighbour $t\notin N(v)$. As $v\in N_2\cup N_3$ due to Claim~\ref{c-uv}, we find by Claim~\ref{c-complete} that $t$ belongs to $N_1$, thus $t\in A(u,v)$, and consequently, $A(u,v)\neq \emptyset$.
We have the following three disjoint situations:
\begin{enumerate}
\item $A(v,u)$ contains a vertex $w$ with $L(w)=\{1,2\}$ that is not adjacent to some vertex $t\in A(u,v)$;
\item $A(v,u)$ contain at least one vertex~$w$ that is not adjacent to some vertex $t\in A(u,v)$, but for all such vertices~$w$ it holds that $L(w)\neq \{1,2\}$.
\item Every vertex in $A(v,u)$ is adjacent to every vertex of $A(u,v)$.
\end{enumerate}
Now we construct a triple $(Q,P,x)=(Q(u),P(u),x(u))$ such that $Q$ is a set which contains $u$, $P\subseteq Q$ is an induced $P_4$ and $x$ is a vertex of $Q$. 
In Situation 1, we let $Q=\{w,t,u,v\}$. We say that $Q$ is of Type~1. We let $x=u$. As $P$ we can take the path on vertices $t,u,v,w$ in that order. 
In Situation 2, we let $Q=\{w,t,u,v\}$ for some $w\in A(v,u)$ that is not adjacent to some $t\in A(u,v)$. We say that $Q$ is of Type~2. We let $x=v$.
As $P$ we can take the path on vertices $t,u,v,w$ in that order. 

Finally, we consider Situation~3.
Let $w$ be in $A(v,u)$. 
Recall that $u$ is active, $|L(w)|=2$, and in Situation~3 all vertices of $A(u,v)$ are adjacent to all vertices in $A(v,u)$, and thus in particular to~$w$.
Therefore, $u$ has a neighbour $s\notin A(u,v)$ that is not adjacent to $w$, otherwise Rule~\ref{r-hood} would be used, a contradiction with $u$ being active.
If $s$ is in $N_1$, then $s$ is adjacent to $v$ since $s$ is not in $A(u,v)$.
If $s$ is in $N_2\cup N_3$, then $s$ is adjacent to $v$ by Claim~\ref{c-layers}. 
Hence, in both cases we find that $s$ belongs to $N(u)\cap N(v)$.

We let $Q=\{s,t,w,u,v\}$ for some $t\in A(u,v)$. We let $x=v$.
We say that $Q$ is of Type~3.
We claim that the vertices $s,u,t,w$ induce a $P_4$ in that order. By definition, $u$ is not adjacent to $w$.
If $sw\in E(G)$, then $L(u)=L(w)$ due to
Rule~\ref{r-diamond}. As $w$ has a list of size~2, $u$ has also a list of size~2. This is a contradiction, as $u$ is an active vertex. 
If $st\in E(G)$, then $L(v)=L(t)$  due to Rule~\ref{r-diamond}. However, this is also a contradiction, as $L(t)=\{1,2\}$ (since $t\in A_1$) and $3\in L(v)$.
Hence, as $P$ we can take the path on vertices $s,u,t,w$ in that order.

In all three situations, we try to extend $Q$ as follows. If $A(u,v)$ contains more vertices than only vertex~$t$, we pick an arbitrary vertex $t'$ of $N(u)\cap N_1\setminus \{t\}$ and put $t'$ to $Q$.

We first observe that if $c(x)=3$ no other vertex of $Q$ can be coloured with colour~$3$; in particular recall that $t$ and $t'$ (if $t'$ exists) both belong to $A_1$, and as such have list~$\{1,2\}$. Moreover, if $Q$ is of Type~2, then any vertex in $A(v,u)$ with list $\{1,2\}$ is adjacent to $t$, as otherwise $Q$ is of Type~1.

\medskip
\noindent
{\bf Branching VI} ($O(n)$ branches)\\
We choose a vertex $u\in A_2$ such that $|N(u)\cap N_1|$ is minimal and create $(Q,P,x)$.
We branch by considering all possibilities of colouring $Q$ such that $c(x)=3$ and the possibility where we remove colour~$3$ from $L(x)$.
The first case leads to $O(1)$ branches, since $|Q|\le6$. We will prove that we either terminate by Rule~\ref{r-2sat} or branch in Branching VII.
In the second case we deactivate $u$ directly or by applying Rules~\ref{r-three} and~\ref{r-one2}.
This is the only recursive branch and the depth of the recursion is $|A_2|\in O(n)$.
Since the first case in the recursion tree always leads either to termination or to subsequent branching in Branching VII, the branching tree in Branching VI  
can be seen as a path of length $O(n)$, where at each node $O(1)$ branches are created. Hence, we have a total of $O(n)$ branches in Branching VI.

\begin{figure}[tb]
	\begin{center}
		\includestandalone[mode=image]{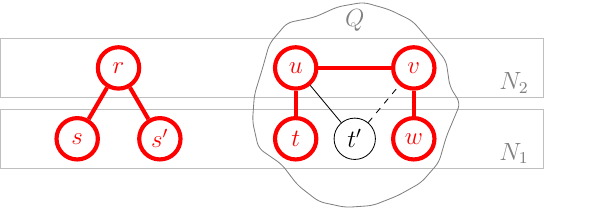}
		\caption{The situation in Branching VI for $Q$ of Type~1. Dashed line denotes an edge that might or might not be there.}\label{f-br6*}
	\end{center}
		\small
\end{figure}

Now consider a branch where $Q$ is coloured.
Although by Rule~\ref{r-one} vertices in~$Q$ will need to be removed from $G$, we make an exception by temporarily keeping $Q$ in the graph after we coloured it until the end of Branching VII. The reason is that this will be helpful for analysing the structure of $(G,L)$.
We run only Rules~\ref{r-2sat},~\ref{r-one2} and~\ref{r-triangle} to prevent changes in the size of neighbourhood of vertices in $A_2$ for the purposes of the next claim (Claim~\ref{c-rLargeDegree}).
Observe that Rules~\ref{r-2sat},~\ref{r-one2} and~\ref{r-triangle} do not decrease the degree of any vertex.
By Rule~\ref{r-2sat}, $A_2\neq \emptyset$. We prove the following claim for vertices in $A_2$.

\clm{\label{c-rLargeDegree}
There is no vertex in $A_2$ with more than one neighbour in $A_1$. Moreover $N(u)\cap A_1=\emptyset$.}{\it Proof of Claim~\ref{c-rLargeDegree}.}
For contradiction, assume that $r$ is a vertex in $A_2$ with two or more neighbours in $A_1$.
By Rule~\ref{r-triangle}, any two distinct neighbours of $r$ in $A_1$ are not adjacent, that is, the neighbours of $r$ in $A_1$ form an independent set.
In particular, for any two distinct neighbours $s$ and $s'$ of $r$ in $A_1$,
the set $\{s,r,s'\}$ induces a~$P_3$. We denote such a path by $P_{s,s'}'$. As every vertex in $A_1$ has list $\{1,2\}$, the only possible edges between $Q$ and $P_{s,s'}'$ are those between $\{s,s'\}$ and vertex~$x$, the only vertex in $Q$ which has colour $3$.

First suppose that $Q$ is of Type~1. Recall that $x=u$.
If $t'$ does not exists, meaning $|N(u)\cap N_1|=1$, the claim follows.
Suppose there exist at least two coloured vertices $t,t'\in Q\cap N(u)\cap N_1$. 
Observe that $N(r)\cap A_1=N(r)\cap N_1$.
We know that $u$ is adjacent to all but one vertex in $N(r)\cap A_1$, as otherwise there are at least two vertices $s$ and $s'$ in $N(r)\cap (A_1\setminus N(u))$ and therefore $V(P'_{s,s'})\cup Q$ induces a $P_3+P_4$, which would be a contradiction.
This situation is captured in Figure~\ref{f-br6*}.
Hence, we find that $|N(u)\cap N_1| \geq |N(r)\cap N_1|-1+2$, which contradicts the choice of $u$.
Thus, if $Q$ is of Type~1, $|N(u)\cap N_1|=1$, so $N(u)\cap A_1=\emptyset$.

Now suppose that $Q$ is of Type~2. Recall that $x=v$. Recall also that if $v$ is adjacent to a vertex in $A_1$, then this vertex must be adjacent to another vertex from $Q$ (either $u$ or $t$) as well, since otherwise $Q$ would be of Type~1.
This is not possible since all vertices in $Q$ are already coloured by colour in $\{1,2\}$.
Therefore we obtain an induced $P_3+P_4$, a contradiction.

Finally, suppose that $Q$ is of Type~3. Recall that $x$ is not in $P$, thus there is no vertex with a list $\{1,2\}$ adjacent to $P$.
Therefore we obtain an induced $P_3+P_4$, a contradiction.

If $Q$ is of Type~2 or~3, vertex $u$ obtained a colour from the set $\{1,2\}$. 
Hence, $N(u)\cap A_1=\emptyset$.
\dia

\medskip 
\noindent
We now run reduction Rules~\ref{r-empty}--\ref{r-three} exhaustively (and in the right order). Recall, however, that we make an exception by not deleting the vertices of $Q$ (specifically, we do not perform the Rule~\ref{r-identify} if it would involve identification or deletion of a vertex in $Q$).

\rmk{\label{c-a2SingleNeigh}
Claim~\ref{c-rLargeDegree} still holds after Rules~\ref{r-empty}--\ref{r-three} were applied.}{\it Proof of the remark.%~\ref{c-a2SingleNeigh}.
}
All vertices in $A_2$ had exactly one neighbour before applying Rules~\ref{r-empty}--\ref{r-three}, by Claim~\ref{c-rLargeDegree}. 
It is readily seen that only Rule~\ref{r-identify} can increase the degree of vertices and no rule can increase the size of a list of any vertex.
This  implies that $N(u)\cap A_1=\emptyset$.

For contradiction, assume that there is a vertex $r$ in $A_2$ with more than one neighbour in $A_1$.
Vertex $r$ was created by Rule~\ref{r-identify}, i.e., by identification of at least two vertices $r_1$, $r_2$ which are common neighbours of two adjacent vertices $s, s'$ satisfying the assumptions of Rule~\ref{r-identify}, in particular $|L(s')|=3$.
Observe that $|L(r_1)|=|L(r_2)|=3$, as $|L(r)|=3$ and $L(r)=L(r_1)\cap L(r_2)$ by Rule~\ref{r-identify}.
Therefore, $r_1,r_2,s'\in N_2\cup N_3$. 
Vertices $r_1, r_2$ are non-adjacent, otherwise $s, s', r_1, r_2$ is a $K_4$.
This is a contradiction with Claim~\ref{c-complete}, as $r_1,r_2,s'$ are not a clique.
\dia

\medskip
\noindent
{\bf Branching VII} ($O(n)$ branches)\\
We branch by considering the possibility of removing colour~3 from the list of each vertex in $A_2$, 
and all possibilities of choosing one vertex in $A_2$, to which we give colour~3, and all possibilities of colouring its neighbour in $A_1$ (recall that this neighbour is unique due to Claim~\ref{c-rLargeDegree}). This leads to $O(n)$ branches. We show that all of them are instances with no vertex with list of size~$3$ and thus
Rule~\ref{r-2sat} can be applied on them.

In the first branch, all lists have size at most~$2$ directly by the construction. 
 
Now consider a branch where a vertex $r\in A_2$ and its unique neighbour $r_1$ in $A_1$ were coloured (where $r$ is given colour~3). 
We make an exception to Rule~\ref{r-one} and temporarily keep vertex $r$ and all its neighbours in $G$, even if they need to be removed from $G$ due to our rules. 

Recall that before $r_1$ was coloured, $L(r_1)=\{1,2\}$ and that every vertex in $A_2$ has exactly one neighbour in $A_1$.
Before assigning a colour to $r$, vertex $r$ had exactly two other neighbours $r_2$ and $r_3$ by Rule~\ref{r-always}, which were in $N_2\cap N_3$, and which were adjacent by Claim~\ref{c-complete}.
We claim that $\{r_1,r,r_2\}$ and $\{r_1,r,r_3\}$ induce a $P_3$, as otherwise $\{r,r_1,r_2,r_3\}$ induce a $K_4$ or a diamond: the first case is not possible due to $K_4$-freeness and in the second case we would have applied Rule~\ref{r-identify} on $r$  and $r_2$ (if $rr_2$ is an edge), or on $r$ and $r_3$ (if $rr_3$ is an edge).
As $G$ is $(P_3+P_4)$-free, there must be at least one edge between $P$ and $\{r_1,r,r_2\}$ and between $P$ and
 $\{r_1,r,r_3\}$.
We first show that such an edge is not incident to $r_1$.

If there exists an edge between $r_1$ and a vertex from $P$, then this vertex must be $x$ (as $r_1$ was in $A_1$ and $L(r_1)=\{1,2\}$ before it was coloured).
First, suppose $Q$ is of Type~1.
Recall that $x=u$. 
However, by Claim~\ref{c-rLargeDegree} $N(u)\cap A_1=\emptyset$.
Now suppose $Q$ is of Type~2.
Then $x=v$. If $r_1$ is adjacent to $v$, then $r_1$ is adjacent to another vertex in $Q$, a contradiction. 
Finally, suppose that $Q$ is of Type~3. Then $x$ is not in $P$. Thus $r_1$ is not adjacent to $P$. 
We conclude from the above that
both $r_2$ and $r_3$ have a neighbour in $P$.
We now prove the following claim.

\clm{\label{c-colored} All vertices $r, r_1, r_2, r_3$ are coloured: $r$ received colour $3$, and each of $r_1, r_2, r_3$ received either colour $1$ or $2$.}{\it Proof of Claim~\ref{c-colored}.}
We only have to show the claim for vertices $r_2$ and $r_3$.
Recall that both $r_2$ and $r_3$ have a neighbour in $P$.
We claim that neighbourhoods of $r_2$ and $r_3$ in $Q$ are disjoint.
Otherwise $r$, $r_2$, $r_3$ and a common neighbour
$d$ of $r_2$ and $r_3$ in $P$ form a diamond such that $d\in Q$ is coloured, and therefore $r$ was not active due to Rule~\ref{r-diamond}, a contradiction.
Hence, at least one neighbour of $r_2$ or at least one neighbour of $r_3$ has obtained a colour different from~$3$.
Since $r$ is coloured by~$3$, the lists of $r_2$ and $r_3$ were reduced by Rule~\ref{r-one2} to $\{1\}$ or $\{2\}$ (or the instance is a no-instance). 
\dia

\begin{figure}[tb]
	\begin{center}
		\includestandalone[mode=image,scale=0.8]{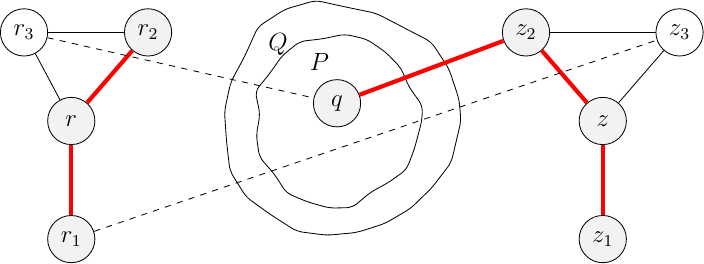}
		\caption{The situation in Branching VII. The dashed lines denote edges that might or might not be there.}\label{f-br7*}
	\end{center}
		\small
\end{figure}

\medskip
\noindent
We are now ready to show that no vertex has a list of size~$3$, and thus applying  Rule~\ref{r-2sat} will solve the instance.
For contradiction assume that there exists a vertex $z$ with $|L(z)|=3$, that is, $z\in A_2$.
Vertices $z_1,z_2,z_3\in N(z)$ exist as $z\in A_2$.
Those vertices are disjoint from $r,r_1,r_2,r_3$ which are by Claim~\ref{c-colored} coloured since $|L(z)|=3$.
The same observations as for neighbours of $r$ hold for neighbours of $z$ by the same arguments as above. Namely, vertex $z_1\in N(z)\cap A_1$ does not have a neighbour in~$P$ and vertices $z_2,z_3$ are in $N_2\cup N_3$ and they induce two $P_3$s: $z_1,z,z_2$ and $z_1,z,z_3$.
Therefore, $z_2,z_3$ have disjoint neighbourhoods in $P$.
Moreover, at least one edge between $r_1$ and $z_2,z_3$ is missing by Rule~\ref{r-diamond} applied on $r_1,z,z_2,z_3$.
We may assume without loss of generality that $r_1z_2\notin E$. Then vertices $z_1,z,z_2,q$, where $q$ is in $N(z_2)\cap V(P)$, induce a new $P_4$.
Again at least one vertex from $r_2,r_3$ is not adjacent to $q$, without loss of generality assume that $r_2q \notin E$, as $r_2$ and $r_3$ have disjoint neighbourhoods in $P$.
As $r_1$ and $r_2$ are coloured by~1 or~2 by Claim~\ref{c-colored}, they have no edge to $z_1$ and to $z$; otherwise $z$ and $z_1$ are not active by Rule~\ref{r-one2}. Recall that $r_1, z_1$ have no neighbour in $P$ and that $r$ had only one neighbour in $A_1$, thus $r$ is not adjacent to $z_1$. 
By~Claim~\ref{c-complete} there are no edges between $r,r_2, r_3$ and $z,z_2, z_3$. Hence $r_1,r,r_2$ together with $z_1,z,z_2,q$ induce a $P_3+P_4$ in $G$, a contradiction (see Figure~\ref{f-br7*} for an example of such a situation).

\medskip
\noindent
The correctness of our algorithm follows from the above description. It remains to analyse its running time. The branching is done in seven stages (Branching I-VII) yielding a total number of $O(n^{49})$ branches. It is readily seen that processing each branch created in Branching I-VII takes polynomial time. 
Hence the total running time of our algorithm is polynomial. 
\end{proof}

\medskip
\noindent
{\bf Remark.}
Except for Phase 4 of our algorithm, all arguments in our proof hold for $(P_3+P_5)$-free graphs. The difficulty in Phase 4 is that in contrary to the previous phases we cannot use the vertices from $N_0$ to find an induced $P_3+P_5$ and therefore obtain the contradiction similarly to the previous phases.

\section{The Proof of Corollary~1}\label{a-a}

By combining our new results from Section~\ref{s-poly} with known results from the literature we can now prove Corollary~\ref{c-summary}.

\medskip
\noindent
{\bf Corollary~\ref{c-summary} (restated).}
{\it Let $H$ be a graph with $|V(H)|\leq 7$. If $H$ is a linear forest, then {\sc List $3$-Colouring} is polynomial-time solvable for $H$-free graphs; otherwise already $3$-{\sc Colouring} is \NP-complete for $H$-free graphs.}

\begin{proof}
If $H$ is not a linear forest, then $H$ contains an induced claw or a cycle, which means that
{\sc 3-Colouring} is \NP-complete due to results in~\cite{EHK98,Ho81,LG83}. Suppose $H$ is a linear forest. We first recall that {\sc List 3-Colouring} is polynomial-time solvable for $P_7$-free graphs~\cite{CMSZ17} and thus for $(rP_1+P_7)$-free graphs for every integer $r\geq 0$~\cite{BGPS12b,GJPS}. Now suppose that $H$ is not an induced subgraph of $rP_1+P_7$ for any $r\geq 0$.   If $H=P_1+3P_2$, then the class of $H$-free graphs is a subclass of $4P_3$-free graphs, for which  {\sc List 3-Colouring} is polynomial-time solvable~\cite{BGPS12b,GJPS}. Otherwise, $H$ has at least two connected components, all of which containing at least one edge. This means that $H\in \{2P_2+P_3, P_2+P_5, P_3+P_4\}$. If $H=2P_2+P_3$, then the class of $H$-free graphs is a subclass of $4P_3$-free graphs, for which we just recalled that {\sc List 3-Colouring} is polynomial-time solvable. The cases where $H=P_2+P_5$ and $H=P_3+P_4$ follow from Theorem~\ref{t-main}.
\end{proof}

\section{Conclusions}\label{s-con}

By solving two new cases we completed the complexity classifications of {\sc $3$-Colouring} and {\sc List $3$-Colouring} on $H$-free 
graphs for graphs~$H$ up to seven vertices. We showed that both problems become polynomial-time solvable if $H$ is a linear 
forest, while they stay \NP-complete in all other cases.
Chudnovsky et al.~improved our results in a recent arXiv paper~\cite{CHSZ18} that appeared after our paper by showing that
{\sc List $3$-Colouring} is polynomial-time solvable on $(rP_3+P_6)$-free graphs for any $r\geq 0$. In the same paper, they also proved that {\sc $5$-Colouring} is 
\NP-complete for $(P_2+P_5)$-free graphs.
Recall that {\sc $k$-Colouring} $(k\geq 3)$ is \NP-complete on $H$-free graphs whenever $H$ is not a linear forest. For the case where $H$ is a linear forest, the \NP-hardness result of~\cite{CHSZ18} for {\sc 5-Colouring} for $(P_2+P_5)$-free graphs, together with the known \NP-hardness results of~\cite{Hu16} for {\sc 4-Colouring} for $P_7$-free graphs and {\sc 5-Colouring} for $P_6$-free graphs, bounds the number of open cases of $k$-{\sc Colouring} from above.

For future research, we remark that it is still not known if there exists a linear forest~$H$ such that {\sc $3$-Colouring} is \NP-complete for $H$-free graphs.  
This is a notorious open problem studied in many papers; for a recent discussion see~\cite{GOPSSS18}. It is also open for {\sc List 3-Colouring}, where an affirmative answer to one of the two problems yields an affirmative answer to the other one~\cite{GPS14b}.
In the line of our proof method, we pose the question if {\sc $3$-Colouring} is polynomial-time solvable on $(P_2+P_{t-2})$-free graphs for some $t\geq 3$ whenever {\sc $3$-Colouring} is polynomial-time solvable for $P_t$-free graphs.  

For $k\geq 4$, we emphasize that all open cases involve linear forests $H$ whose connected components are small. For instance, if $H$ has at most six vertices, then the polynomial-time algorithm for 4-{\sc Precolouring Extension} on $P_6$-free graphs~\cite{CSZ} implies that there are only three graphs~$H$ with $|V(H)|\leq 6$ for which we do not know the complexity of 4-{\sc Colouring} on $H$-free graphs, namely $H \in\{P_1+P_2+P_3, P_2+P_4,2P_3\}$ (see~\cite{GJPS}).

The main difficulty to extend the known complexity results is that hereditary graph classes characterized by a forbidden induced linear forest are still not sufficiently well understood due to their rich structure (proofs of algorithmic results for these graph classes are therefore often long and technical; see also, for example, \cite{BCMSZ,CSZ}). We need a better understanding of these graph classes in order to make further progress. This is not only the case for the two colouring problems in this paper. For example, the {\sc Independent Set} problem is known to be polynomial-time solvable for $P_6$-free graphs~\cite{GKPP17}, but it is not known if there exists a linear forest~$H$ such that it is \NP-complete for $H$-free graphs. A similar situation holds for {\sc Odd Cycle Transversal} and {\sc Feedback Vertex Set} and a whole range of other problems; see~\cite{BDFJP} for a survey.

\medskip
\noindent
{\bf Acknowledgments.} We thank Karel Kr\'al for pointing out a mistake in a preliminary version of our paper.
We also thank two anonymous reviewers for their detailed comments and suggestions.

\bibliographystyle{plainurl}

\end{document}